\tikzstyle{point}=[circle,  inner sep=2pt, fill]
\algnewcommand\algorithmicforeach{\textbf{for each}}
\newenvironment{proofs}{%
  \proof}{\endproof}
\newcommand{\littletaller}{\mathchoice{\vphantom{\big|}}{}{}{}}
\newcommand\restr[2]{{% we make the whole thing an ordinary symbol
  \left.\kern-\nulldelimiterspace % automatically resize the bar with \right
  #1 % the function
  \littletaller % pretend it's a little taller at normal size
  \right|_{#2} % this is the delimiter
  }}
\newcommand{\setPriorities}{\mathbb{P}}
\newcommand{\nat}{\mathbb{N}}
\newcommand{\eve}{\exists}
\newcommand{\adam}{\forall}
\newcommand{\indPlay}[3]{p_{#1, #2;#3}}
\newcommand{\pg}[1]{\mathcal{#1}}
\newcommand{\opg}[1]{\mathcal{#1}}
\newcommand{\shortcut}[1]{\mathcal{C}(#1)}
\newcommand{\interface}{\mathsf{IO}}
\newcommand{\defeq}{:=}
\newcommand{\dr}{\mathbf{r}}
\newcommand{\dl}{\mathbf{l}}
\newcommand{\en}{I}
\newcommand{\ex}{O}
\newcommand{\enr}{I_{\dr}}
\newcommand{\enrarg}[1]{i_{\dr, #1}}
\newcommand{\enl}{I_{\dl}}
\newcommand{\enlarg}[1]{i_{\dl, #1}}
\newcommand{\exr}{O_{\dr}}
\newcommand{\exrarg}[1]{o_{\dr, #1}}
\newcommand{\exl}{O_{\dl}}
\newcommand{\exlarg}[1]{o_{\dl, #1}}
\newcommand{\seqcomp}{\fatsemi}
\newcommand{\sd}[1]{\mathbb{#1}}
\newcommand{\semantics}[1]{\llbracket #1 \rrbracket}
\newcommand{\semPlay}[1]{
\semEntrance{#1}
}
\newcommand{\semOPGs}[1]{
\semEntrance{#1}
}
\newcommand{\posStrategies}[2]{\Sigma^{#1}_{#2,\mathrm{p}}}
\newcommand{\priorityOrder}{\preceq_{\setPriorities}}
\newcommand{\priorityStOrder}{\prec_{\setPriorities}}
\newcommand{\enrichPriorityOrder}{\preceq_{\overline{\setPriorities}}}
\newcommand{\domainOrder}{\preceq_{\mathbf{D}}}
\newcommand{\playOrder}{\preceq_{\mathbf{P}}}
\newcommand{\upperOrder}{\preceq_{\mathbf{U}}}
\newcommand{\lowerOrder}{\preceq_{\mathbf{L}}}
\newcommand{\succLowerOrder}{\succeq_{\mathbf{L}}}
\newcommand{\domain}[1]{\mathcal{D}_{#1}}
\newcommand{\paretoCurve}[2]{\mathcal{S}(#1, #2)}
\newcommand{\paretoCurves}[1]{\mathcal{S}(#1)}
\newcommand{\edgeConstructor}[3]{\mathbb{G}(#1, #2, #3)}
\newcommand{\semEntrance}[1]{
\llparenthesis #1 \rrparenthesis
}
\newcommand{\potentialFuncs}[1]{\mathbb{Q}_{#1}}
\newcommand{\potentialOrder}[1]{\preceq_{\mathbb{Q}_{#1}}}
\newcommand{\dualPriority}[1]{#1^d}
\newcommand{\queryAnswer}[1]{r(#1)}
\newcommand{\setPoints}[1]{\mathbf{R}_{#1}}
\newcommand{\resultOrder}[1]{\preceq_{\mathbf{R}_{#1}}}
\newcommand{\succPointOrder}[1]{\succeq_{\mathbf{R}_{#1}}}
\newcommand{\result}{\mathbf{r}}
\newcommand{\myparagraph}[1]{\smallskip\noindent \emph{#1}}
\newcommand{\worst}[1]{#1^{\mathrm{min}}}
\newcommand{\best}[1]{#1^{\mathrm{max}}}
\crefname{definition}{Def.}{Defs}
\crefname{theorem}{Thm.}{Thms}
\crefname{proposition}{Prop.}{Props}
\crefname{remark}{Rem.}{Remarks}
\crefname{lemma}{Lem.}{Lemmas}
\crefname{proof}{Proof.}{Proofs}
\crefname{appendix}{Appendix}{Appendixes}
\crefname{algorithm}{Alg.}{Algs}
\crefname{figure}{Fig.}{Figs}
\Crefname{equation}{}{}
\title{Pareto Fronts for Compositionally Solving String Diagrams of Parity Games} %TODO Please add
\author{Kazuki Watanabe}{National Institute of Informatics, Tokyo\\ The Graduate University for Advanced Studies (SOKENDAI), Hayama}{kazukiwatanabe@nii.ac.jp}{https://orcid.org/0000-0002-4167-3370}{}
\authorrunning{Kazuki Watanabe} %TODO mandatory. First: Use abbreviated first/middle names. Second (only in severe cases): Use first author plus 'et al.'
\keywords{parity game, compositionality, string diagram} %TODO mandatory; please add comma-separated list of keywords
\begin{document}

\maketitle

%TODO mandatory: add short abstract of the document
\begin{abstract}
Open parity games are proposed as a compositional extension of parity games with algebraic operations, forming string diagrams of parity games.
A potential application of string diagrams of parity games is to describe a large parity game with a given compositional structure and solve it efficiently as a divide-and-conquer algorithm by exploiting its compositional structure.

Building on our recent progress in \emph{open Markov decision processes}, we introduce \emph{Pareto fronts} of open parity games, offering a framework for multi-objective solutions.
We establish the \emph{positional determinacy} of open parity games with respect to their Pareto fronts through a novel translation method.
Our translation converts an open parity game into a parity game tailored to a given single-objective.
Furthermore, we present a simple algorithm for solving open parity games, derived from this translation that allows the application of existing efficient algorithms for parity games.
Expanding on this foundation, we develop a compositional algorithm for string diagrams of parity games.
\end{abstract}

\section{Introduction}
\emph{Parity games} (on finite graphs) have been actively studied, and the search for efficient algorithms is a central topic for parity games.
A key property of parity games is \emph{positional determinacy}~\cite{DBLP:conf/lop/Emerson85,DBLP:conf/focs/EmersonJ91}, which underpins their decidability
by ensuring that positional strategies suffice to determine the winner at any node. 
Calude et al.~\cite{DBLP:journals/siamcomp/CaludeJKLS22} show that parity games are solvable in quasi-polynomial-time, leading to the development of subsequent quasi-polynomial-time algorithms~\cite{DBLP:conf/lics/JurdzinskiL17,DBLP:journals/sttt/FearnleyJKSSW19,DBLP:conf/lics/Lehtinen18}.
The significance of these advancements extends beyond theoretical interest; they have practical applications in various verification problems.
Notably, many verification problems can be reduced to solving parity games,
including model checking with temporal specifications~\cite{DBLP:journals/tcs/Kozen83,DBLP:conf/stoc/KupfermanV98,DBLP:journals/tcs/EmersonJS01,DBLP:conf/lics/Ong06,DBLP:conf/lics/KobayashiO09}.

\begin{figure}[t]
    \centering
    \begin{minipage}[c]{0.3\hsize} 
        \centering 
        \scalebox{0.6}{
    \begin{tikzpicture}[
innodeEve/.style={draw, circle, minimum size=0.5cm,fill=white},
innodeAdam/.style={draw, diamond, minimum size=0.5cm,fill=white},
interface/.style={draw, rectangle, minimum size=0.5cm},]
\fill[lime] (-0.7cm, -1.2cm)--(-0.7cm, 1.2cm)--(4.9cm, 1.2cm)--(4.9cm, -1.2cm)--cycle;
        \node[interface,fill=white, yshift=0.5cm] (s1) {\scalebox{0.8}{$\enrarg{1}$}};
        \node[interface,fill=white, yshift=-0.5cm] (s9) {\scalebox{0.8}{$\exlarg{1}$}};
        \node[inner sep=0,right=-1.3cm of s9] (exl1) {};
        \node[inner sep=0,right=-1.3cm of s1] (enr1) {};
        \node[innodeEve,right=0.75cm of s1] (s2) {\scalebox{0.8}{$a$}};
        \node[innodeAdam,right=0.75cm of s9] (s3) {\scalebox{0.8}{$b$}};
        \node[innodeAdam,right=0.75cm of s2] (s4) {\scalebox{0.8}{$c$}};
        \node[innodeEve,right=0.75cm of s3] (s5) {\scalebox{0.8}{$d$}};
        % \node[innodeAdam,right=0.75cm of s4] (s6) {\scalebox{0.8}{$e$}};
        \node[interface,fill=white,right=0.75cm of s4] (s7) {\scalebox{0.8}{$\exrarg{1}$}};
        \node[inner sep=0,right=0.6cm of s7] (exr1) {};
        \node[interface,fill=white,right=0.75cm of s5] (s8) {\scalebox{0.8}{$\enlarg{1}$}};
        \node[inner sep=0,right=0.6cm of s8] (enl1) {};
        \draw[->] (enr1) -> (s1);
        \draw[->] (s9) -> (exl1);
        \draw[->] (s1) -> node [above] {\scalebox{0.8}{$0$}} (s2);
        \draw[->] (s2) -> node [above] {\scalebox{0.8}{$1$}} (s9);
        \draw[->] (s3) -> node [above] {\scalebox{0.8}{$1$}} (s9);
        \draw[->] (s2) -> node [above] {\scalebox{0.8}{$1$}} (s4);
        \draw[->] (s4) -> node [left] {\scalebox{0.8}{$0$}} (s5);
        \draw[->] (s5) -> node [above] {\scalebox{0.8}{$2$}} (s3);
        \draw[->] (s3) -> node [left] {\scalebox{0.8}{$1$}} (s2);
        % \draw[->] (s4) -> node [above] {\scalebox{0.8}{$2$}} (s6);
        \draw[->] (s4) -> node [above] {\scalebox{0.8}{$1$}} (s7);
        \draw[->] (s8) -> (s5);
        % \draw[->] (s5) -> node [above] {\scalebox{0.8}{$0$}} (s6);
        \draw[->] (s7) -> (exr1);
        \draw[->] (enl1) -> (s8);
    \end{tikzpicture}
    }
    \end{minipage}
    \begin{minipage}[c]{0.3\hsize} 
        \centering 
        \includegraphics[scale=0.3]{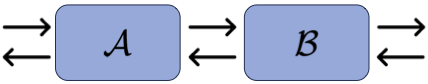}
    \end{minipage}
     \begin{minipage}[c]{0.3\hsize} 
        \centering 
        \includegraphics[scale=0.3]{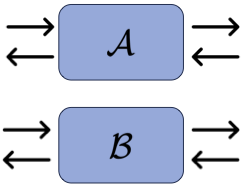}
        \label{subfig:sum} 
    \end{minipage}
    \caption{(i) an open parity game (oPG), (ii) the sequential composition $\opg{A}\seqcomp \opg{B}$ of oPGs $\opg{A},\opg{B}$, and (iii) the sum $\opg{A}\oplus \opg{B}$ of oPGs $\opg{A},\opg{B}$.}
    \label{fig:string_diagrams}
\end{figure}

Compositionality in verification is a property that enables divide-and-conquer algorithms by exploiting compositional structures, aiming to solve the notorious state space explosion problem.
Clarke et al.~\cite{DBLP:conf/lics/ClarkeLM89} provided the first compositional model checking algorithm, and since then, many compositional model checking algorithms have been proposed for LTL~\cite{DBLP:conf/cav/Kaivola92}, Markov decision processes~\cite{DBLP:journals/iandc/KwiatkowskaNPQ13,WVHRJ2024accepted,DBLP:conf/cav/WatanabeEAH23}, and higher-order model checking~\cite{DBLP:conf/csl/TsukadaO14}.
Recently, we introduced \emph{string diagrams of parity games}~\cite{DBLP:journals/corr/abs-2112-14058} as a compositional extension of parity games.
We proposed \emph{open parity games} that can be composed using two algebraic operations: the (bidirectional) sequential composition and the sum.
\cref{fig:string_diagrams} shows an open parity game and illustrates two algebraic operations over open parity games $\opg{A},\opg{B}$ in string diagrams of parity games.  
Although we did not provide a compositional algorithm in the original work~\cite{DBLP:journals/corr/abs-2112-14058},
we developed one for \emph{string diagrams of mean payoff games}~\cite{DBLP:journals/corr/abs-2307-08034}.
The methodology developed is also adaptable to string diagrams of parity games.

A major difficulty for compositional algorithms arises from the \emph{multi-objective perspective}\footnote{This differs from having several parity objectives.} on individual components: 
typically, even if a given ``global'' objective is single-objective, which is the parity condition for parity games, ``local'' objectives on components become multi-objective.

\begin{figure}[t]
        \centering 
        \scalebox{0.9}{
    \begin{tikzpicture}[
innodeEve/.style={draw, circle, minimum size=0.5cm,fill=white},
innodeAdam/.style={draw, diamond, minimum size=0.5cm,fill=white},
interface/.style={draw, rectangle, minimum size=0.5cm},]
        \fill[orange] (-0.7cm, -1.2cm)--(-0.7cm, 1.2cm)--(3.4cm, 1.2cm)--(3.4cm, -1.2cm)--cycle;
        \node[inner sep=0] at (3.2cm, 1.0cm) (A) {$\opg{A}$};
        \node[interface,fill=white] (s1) {\scalebox{0.8}{$\enrarg{1}$}};
        \node[inner sep=0,right=-1.3cm of s1] (enr1) {};
        \node[innodeEve,right=0.75cm of s1] (s2) {\scalebox{0.8}{$a$}};
        \node[interface,fill=white,right=0.75cm of s2, yshift=0.65cm] (s7) {\scalebox{0.8}{$\exrarg{1}$}};
        \node[inner sep=0,right=0.6cm of s7] (exr1) {};
        \node[interface,fill=white,right=0.75cm of s2, yshift=-0.65cm] (s8) {\scalebox{0.8}{$\exrarg{2}$}};
        \node[inner sep=0,right=0.6cm of s8] (exr2) {};
        \draw[->] (enr1) -> (s1);
        % \draw[->] (s9) -> (exl1);
        \draw[->] (s1) -> node [above] {\scalebox{1}{$0$}} (s2);
        % \draw[->] (s2) -> node [above] {\scalebox{0.8}{$1$}} (s9);
        \draw[->] (s2) -> node [above] {\scalebox{1}{$3$}} (s7);
        \draw[->] (s2) -> node [above] {\scalebox{1}{$2$}} (s8);
        \fill[cyan] (3.9cm, -1.2cm)--(3.9cm, 1.2cm)--(7.4cm, 1.2cm)--(7.4cm, -1.2cm)--cycle;
        \node[inner sep=0] at (7.2cm, 1.0cm) (B) {$\opg{B}$};
        % \fill[lime] (3.9cm, -1.2cm)--(3.9cm, -0.2cm)--(7.4cm, -0.2cm)--(7.4cm, -1.2cm)--cycle;
        % \node[inner sep=0] at (7.2cm, -0.4cm) (C) {$\opg{C}$};
        % \node[inner sep=0] at (5cm, 1.0cm) (B) {$\opg{B}_p$};
        \node[interface,fill=white, right = 0.5cm of exr1] (enr2) {\scalebox{0.8}{$\enrarg{2}$}};
        \node[innodeEve,right=0.75cm of enr2] (s3) {\scalebox{1}{$b$}};
        \node[interface,fill=white, right = 0.5cm of exr2] (enr3) {\scalebox{0.8}{$\enrarg{3}$}};
        \node[innodeEve,right=0.75cm of enr3] (s4) {\scalebox{1}{$c$}};
        \draw[->] (s7) -> (enr2);
        \draw[->] (enr2) -> node [above] {\scalebox{1}{$0$}} (s3);
        \draw[->] (s8) -> (enr3);
        \draw[->] (enr3) -> node [above] {\scalebox{1}{$0$}} (s4);
        \draw[->] (s3) edge[loop right] node[right] {$m_{1}$} (s3); 
        \draw[->] (s4) edge[loop right] node[right] {$m_{2}$} (s4); 
    \end{tikzpicture}
    }
        % \caption{the sequential composition $\opg{A}\seqcomp \opg{B}$} \label{subfig:seqcomp} 
    % \end{minipage}
    \caption{The parity game $\opg{A}\seqcomp \opg{B}$. Every node is owned by Player $\eve$. }
    \label{fig:multi-objective}
\end{figure}
For instance, \cref{fig:multi-objective} shows a parity game $\opg{A}\seqcomp \opg{B}$ consisting of open parity games $\opg{A}$ and $\opg{B}$.
In the open parity game $\opg{A}$, the Player $\eve$ has the choice of moving from the node $a$ to either the node $\exrarg{1}$ or the node $\exrarg{2}$. 
However, the optimal  $\eve$-strategies in $\opg{A}\seqcomp \opg{B}$ may vary, depending on the priorities $m_1$ and $m_2$ in the open parity game $\opg{B}$. 
Consequently, it is necessary to compute all  ``potentially'' optimal strategies on $\opg{A}$ without any prior knowledge of $\opg{B}$ in order to solve the parity game $\opg{A}\seqcomp \opg{B}$ compositionally. 
Our previous work on string diagrams of mean payoff games~\cite{DBLP:journals/corr/abs-2307-08034} tackled this by enumerating such potentially optimal strategies in open mean payoff games.
The enumeration algorithm requires exponential time in the number of nodes of the given open mean payoff game, which is infeasible in practice.  

By following our recent progress on string diagrams of Markov decision processes (MDPs)~\cite{WVHRJ2024accepted}, 
we introduce \emph{Pareto fronts} of open parity games, which are intuitively sets of optimal results on open parity games.
We prove \emph{positional determinacy} of open parity games, that is, positional strategies suffice to achieve Pareto fronts (\cref{thm:positionalDeterminacy}). 
Our proof method involves selecting a ``single-objective'', the number of which is finite, and 
we show that positional strategies suffice to satisfy the given single-objective.
This procedure is repeated for each possible single-objective. 
In the proof, we propose a novel translation that converts an open parity game into a (non-open) parity game tailored to a given single-objective (\cref{def:loop_construction}). 
Moreover, our translation leads to a simple algorithm for open parity games that requires exponential time only in the number of exits, not in the number of whole nodes as in our previous algorithm~\cite{DBLP:journals/corr/abs-2307-08034}: here our algorithm use known quasi-polynomial-time algorithms (\cref{alg:paretoCurveOPGs} and \cref{prop:complexityOPG}). 
On top of these developments, we provide a compositional algorithm of string diagrams of parity games by composing Pareto fronts of open parity games (\cref{alg:sdPGs}).

\myparagraph{Contributions:}
\begin{itemize}
    \item we introduce Pareto fronts of open parity games in \cref{subsec:pareto_curves_for_oPGs}, 
    \item we prove the positional determinacy of Pareto fronts of open parity games, which is stated in~\cref{subsec:positionalDeterminacy}, by a novel translation from an open parity game to a parity game with a given single-objective in \cref{subsec:loopConst},
    \item we present a simple algorithm that computes Pareto fronts of open parity games, and provide an upper bound of its time complexity in~\cref{subsec:alg_time_complexity}, which shows improvements over our previous algorithm~\cite{DBLP:journals/corr/abs-2307-08034},
    \item we provide a compositional algorithm of string diagrams of parity games in~\cref{sec:solvingSD}. 
\end{itemize}

 \subsection{Related Work}
\emph{String diagrams} are a graphical language, which is mostly described in category theory. 
They effectively model signal flow diagrams~\cite{DBLP:journals/jacm/BonchiGKSZ22,DBLP:conf/popl/BonchiSZ15,DBLP:journals/iandc/BonchiSZ17,baez2014categories}, quantum processes~\cite{DBLP:conf/icalp/CoeckeD08,DBLP:books/cu/CK2017}, Petri nets~\cite{DBLP:journals/pacmpl/BonchiHPSZ19,DBLP:conf/lics/BaezGMS21}, and games for economics~\cite{DBLP:conf/lics/GhaniHWZ18,DBLP:conf/csl/LavoreH021}.
Rathke et al.~\cite{DBLP:conf/rp/RathkeSS14} derive a divide-and-conquer algorithm for reachability analysis of Petri nets, and we provide a compositional algorithm for string diagrams of mean payoff games in~\cite{DBLP:journals/corr/abs-2307-08034}, which can be readily adapted to string diagrams of parity games~\cite{DBLP:journals/corr/abs-2112-14058}.
Recently, Piedeleu~\cite{Piedeleu25} provides a sound and complete diagrammatic language of parity games w.r.t. the semantics of open parity games~\cite{DBLP:journals/corr/abs-2112-14058}. 
In this paper, we propose a compositional algorithm for string diagrams of parity games based on a rather efficient semantics, focusing on the positional determinacy of open parity games.

\emph{Pareto fronts}~\cite{DBLP:conf/focs/PapadimitriouY00,DBLP:journals/lmcs/EtessamiKVY08} offer a well-formulated solution for multi-objective optimization problems.
They enhance the performance of compositional probabilistic model checking~\cite{WVHRJ2024accepted}.
A crucial property of Pareto fronts in~\cite{WVHRJ2024accepted} is again \emph{positional determinacy}: each point on Pareto fronts is induced by a positional scheduler, which can be computed using standard algorithms for (non-open) MDPs~\cite{DBLP:conf/atva/ForejtKP12}. 

In~\cite{WVHRJ2024accepted}, we propose a compositional probabilistic model checking of string diagrams of MDPs with Pareto fronts: we introduce \emph{shortcut MDPs} and prove their compositionality, analogous to~\cref{prop:base_Pareto_curve,prop:comp_Pareto_curve}.
A major contribution of this paper, compared to~\cite{WVHRJ2024accepted}, is the introduction of a novel translation from open parity games to parity games, designed to efficiently compute Pareto fronts.
This stands in contrast to the already established translation for open MDPs~\cite{DBLP:conf/atva/ForejtKP12}.

Multi-objectives in games have been well-studied in the literature.
Velner et al.~\cite{DBLP:journals/iandc/VelnerC0HRR15} study the complexity of multi-mean-payoff and multi-energy games. 
Chen et al.~\cite{DBLP:conf/mfcs/ChenFKSW13} develop an approximation method for Pareto fronts of stochastic games.
Recently, a Pareto-based rational verification problem, which interacts with the environment, is introduced in~\cite{DBLP:conf/concur/BruyereRT22}.

Obdr\v{z}\'{a}lek~\cite{DBLP:conf/cav/Obdrzalek03}, and Fearnley and Schewe~\cite{DBLP:journals/corr/abs-1112-0221} provide polynomial time algorithms for parity games on graphs of bounded tree-width.
Berwanger et al.~\cite{DBLP:conf/stacs/BerwangerDHK06} show that parity games on graphs of bounded DAG-width can be solved in polynomial time.
The notion of \emph{result} in~\cite{DBLP:conf/stacs/BerwangerDHK06} gathers optimal results of Player $\adam$ for each strategy employed by Player $\eve$, regardless of its optimality.  
In contrast, our Pareto fronts collect optimal results not only from Player $\adam$, but also from Player $\eve$, and are positionally determined.  
Although our compositional algorithm is not a polynomial time algorithm for string diagrams of parity games, it does not require assumptions such as bounded tree- or DAG-widths.

Compositionality in parity games seems to be linked to \emph{higher-order model checking}~\cite{DBLP:conf/lics/Ong06,DBLP:conf/lics/KobayashiO09}, a model checking method of higher-order programs. 
In higher-order model checking, specifications are given as parity tree automata, which are equivalent to modal $\mu$-calculus~\cite{DBLP:conf/focs/EmersonJ91}.
In fact, the semantics of string diagrams of parity games~\cite{DBLP:journals/corr/abs-2112-14058} is based on the semantics of higher-order model checking~\cite{DBLP:conf/mfcs/GrelloisM15}. 
Unlike~\cite{DBLP:conf/mfcs/GrelloisM15} that employs the trivial discrete order on priorities, we utilize the \emph{sub-priority order}~\cite{DBLP:conf/stacs/BerwangerDHK06,DBLP:conf/aplas/FujimaIK13}, which is central to our development. 
Fujima et al.~\cite{DBLP:conf/aplas/FujimaIK13} introduce the subtype relation in the intersection type, which corresponds to our order for results of games defined in~\cref{def:results_order}.  
Tsukada and Ong~\cite{DBLP:conf/csl/TsukadaO14} present a compositional higher-order model checking over the composition of B\"{o}hm trees. We employ the \emph{dual} of priorities (\cref{def:least_passed_priority}), which in~\cite{DBLP:conf/csl/TsukadaO14} is called the left-residual of $0$.
The duality of queries shown in~\cref{lem:dualQuery} is conceptually similar to the notion of the complementarity predicate in~\cite{DBLP:journals/iandc/SalvatiW14}. 
They utilize the property for showing the relationship between a game and its dual in the context of higher-order model checking.

\section{Formal Problem Statement}
 We revisit parity games and string diagrams of parity games~\cite{DBLP:journals/corr/abs-2112-14058}, and we formally state our target problem. 
 Throughout this paper, we fix the finite set $\setPriorities \defeq \{0, 1, \dots, M\}$ of \emph{priorities}. We assume that the maximum priority $M$ is even and $M\geq 2$ without loss of generality. 

\subsection{Parity Games}
\begin{definition}[parity game]
A \emph{parity game (PG)} $\pg{G}$ is a tuple $(V, V_{\eve}, V_{\adam}, E, \Omega)$, 
where the set $V$ is a non-empty finite set of \emph{nodes},
the sets $V_{\eve}, V_{\adam}$ are a partition of $V$ such that $V_{\eve}$ is owned by Player $\eve$ and $V_{\adam}$ is owned by Player $\adam$,
the set $E$ is a set of \emph{edges} $E\subseteq V\times V$, 
and the function $\Omega$ is a \emph{labelling function} $\Omega\colon E\rightarrow \setPriorities$, assigning priorities to edges.  
 \end{definition}
We assume that every node has a successor in PGs. 
A \emph{play} $p$ is an infinite path $(v_i)_{i\in \nat}$ such that $(v_i, v_{i+1})\in E$, for any $i\in \nat$.
A play $p$ is \emph{winning (for Player $\eve$)} if the maximum priority that occurs in $p$ infinitely often is even.
Otherwise, the play $p$ is \emph{losing (for Player $\eve$)}. 
We also call the winning condition as the \emph{parity condition}. 
An \emph{$\eve$-strategy $\sigma_{\eve}$ (of Player $\eve$)} is a function $\sigma_{\eve}\colon V^{\ast}\cdot V_{\eve} \rightarrow V$ such that for any finite path $v_1\cdots v_n$, if $v_n\in V_{\eve}$, then $\big(v_n,\sigma_{\eve}(v_1\cdots v_n)\big) \in E$.
A \emph{$\adam$-strategy $\sigma_{\adam}$ (of Player $\adam$)} is analogously defined. 
An $\eve$-strategy $\sigma_{\eve}$ is a \emph{positional strategy} if for any $v_1\cdots v_m$ and $v'_1\dots v'_n$ such that $v_m = v'_n\in V_{\eve}$, their successors coincide $\sigma_{\eve}(v_1\cdots v_m) = \sigma_{\eve}(v'_1\cdots v'_n)$; \emph{positional $\adam$-strategies} are defined in the same way. 
We write $\sigma_{\eve}\colon V_{\eve}\rightarrow V $ (and $\sigma_{\adam}\colon V_{\adam}\rightarrow V $) for a positional $\eve$-strategy (and a positional $\adam$-strategy).
Given a pair $(\sigma_{\eve}, \sigma_{\adam})$ of $\eve$- and $\adam$-strategies, and a node $v$, a \emph{play $\indPlay{\sigma_{\eve}}{\sigma_{\adam}}{v}$ from $v$ that is induced by $(\sigma_{\eve}, \sigma_{\adam})$} is the play $(v_i)_{i\in \nat} $ such that $v_1 = v$,
and  for any index $i$, if $v_i\in V_{\eve}$, then $\sigma_{\eve}(v_1\cdots v_i) = v_{i+1}$, and  if $v_i\in V_{\adam}$, then $\sigma_{\adam}(v_1\cdots v_i) = v_{i+1}$. 
PGs are \emph{positionally determined}~\cite{DBLP:conf/lop/Emerson85,DBLP:conf/focs/EmersonJ91}, that is, there is a partition $(W_{\eve}, W_{\adam})$ of $V$, called \emph{winning regions}, such that $v\in W_{\eve} $ iff  there is a winning \emph{positional} strategy $\sigma_{\eve}\colon V_{\eve}\rightarrow V$ for Player $\eve$, and $v\in W_{\adam}$ iff there is a winning positional strategy $\sigma_{\adam}\colon V_{\adam}\rightarrow V$ for Player $\adam$.

\subsection{String Diagrams of Parity Games}
\label{subsec:sd_pgs}
String diagrams of parity games~\cite{DBLP:journals/corr/abs-2112-14058} consist of \emph{open parity games (oPGs)}, 
which are a compositional extension of PGs. OPGs have \emph{open ends} that are \emph{entrances} and \emph{exits}.
We illustrate an oPG in~\cref{fig:running_example} as a running example, and 
will continue to refer to this example until~\cref{subsec:positionalDeterminacy}. 
It is important to note that oPGs without exits can be regarded as PGs.
\begin{definition}[open parity game~\cite{DBLP:journals/corr/abs-2112-14058}]
\label{def:open_parity_game}
An \emph{open parity game (oPG)} $\opg{A}$ is a pair $(\pg{G}, \interface)$ of a PG $\pg{G} = (V, V_{\eve}, V_{\adam}, E, \Omega)$ and open ends $\interface = (\enr, \enl, \exr, \exl) $,
where $\enr, \enl, \exr, \exl$ are nodes $\enr, \enl, \exr, \exl\subseteq V_{\eve}$ that are totally ordered, respectively, and each pair of them is disjoint.
Open ends in $\enr, \exr$ are \emph{rightward}, and open ends in $\enl, \exl$ are \emph{leftward}. 
We call $\en \defeq \enr\cup \enl$ \emph{entrances}, and $\ex \defeq \exr\cup \exl$ \emph{exits}, respectively.
We also write $\opg{A}\colon (m_{\dr}, m_{\dl})\rightarrow (n_{\dr}, n_{\dl})$ for the oPG $\opg{A}$ with the type of open ends, where $m_{\dr}\defeq |\enr|$, $m_{\dl}\defeq |\enl|$, $n_{\dr}\defeq |\exr|$, and $n_{\dl}\defeq |\exl|$.
We additionally impose that every exit $o\in\exr\cup \exl$ is a sink node, that is, for any $v\in V$, $(o, v)\in E$ implies $v = o$ and $\Omega(o, o) = 0 $.  
\end{definition}

\begin{figure}[t]
    \centering
    \scalebox{1.4}{
    \begin{tikzpicture}[
innodeEve/.style={draw, circle, minimum size=0.5cm,fill=white},
innodeAdam/.style={draw, diamond, minimum size=0.5cm,fill=white},
interface/.style={draw, rectangle, minimum size=0.5cm,fill=white},]
        \fill[cyan] (-0.6cm, -1cm)--(-0.6cm, 1.2cm)--(4.6cm, 1.2cm)--(4.6cm, -1cm)--cycle;
        \node[inner sep=0] at (4.4cm, 1.0cm) (A) {$\opg{C}$};
        \node[interface,fill=white, yshift=0.5cm] (s1) {\scalebox{0.8}{$\enrarg{1}$}};
        \node[interface,fill=white, yshift=-0.5cm] (s9) {\scalebox{0.8}{$\exlarg{1}$}};
        \node[inner sep=0,right=-1.3cm of s9] (exl1) {};
        \node[inner sep=0,right=-1.3cm of s1] (enr1) {};
        \node[innodeEve,right=0.75cm of s1] (s2) {\scalebox{0.8}{$a$}};
        \node[innodeAdam,right=0.75cm of s9] (s3) {\scalebox{0.8}{$b$}};
        \node[innodeEve,right=0.75cm of s2] (s4) {\scalebox{0.8}{$c$}};
        \node[innodeAdam,right=0.75cm of s3] (s5) {\scalebox{0.8}{$d$}};
        % \node[innodeAdam,right=0.75cm of s4] (s6) {\scalebox{0.8}{$e$}};
        \node[interface,fill=white,right=0.75cm of s4, yshift=-0.4cm] (s7) {\scalebox{0.8}{$\exrarg{1}$}};
        \node[inner sep=0,right=0.6cm of s7] (exr1) {};
        \draw[->] (enr1) -> (s1);
        \draw[->] (s9) -> (exl1);
        \draw[->] (s1) -> node [above] {\scalebox{0.8}{$0$}} (s2);
        % \draw[->] (s2) -> node [above] {\scalebox{0.8}{$1$}} (s9);
        \draw[->] (s3) -> node [above] {\scalebox{0.8}{$0$}} (s9);
        \draw[->] (s2) -> node [above] {\scalebox{0.8}{$1$}} (s4);
        \draw[->] (s4) -> node [left] {\scalebox{0.8}{$0$}} (s5);
        \draw[->] (s4) -> node [above] {\scalebox{0.8}{$0$}} (s7);
        \draw[->] (s5) -> node [above] {\scalebox{0.8}{$1$}} (s3);
        \draw[->] (s3) -> node [left] {\scalebox{0.8}{$3$}} (s2);
        \draw[->] (s5) -> node [above] {\scalebox{0.8}{$2$}} (s7);
        % \draw[->] (s6) -> node [above] {\scalebox{0.8}{$1$}} (s7);
        % \draw[->] (s5) -> node [above] {\scalebox{0.8}{$0$}} (s6);
        \draw[->] (s7) -> (exr1);
        % \draw[->] (s0) -> node [above] {$\scalebox{0.7}{b}$} (bs0);
        % \draw[->] (as0) -> node [above] {$\scalebox{0.7}{1}$} (s1);
        % \draw[->] (s1) -> node [above] {$\scalebox{0.7}{a}$} (as1);
        % \draw[->] (s1) -> node [above] {$\scalebox{0.7}{b}$} (bs1);
        % \draw[->] (bs0) -> node [left, xshift=-0.5cm] {$\scalebox{0.7}{0.27}$} (s3);
        % \draw[->] (bs0) -> node [below] {$\scalebox{0.7}{0.3}$} (s4);
        % \draw[->] (as1) -> node [above] {$\scalebox{0.7}{0.2}$} (s3); 
        % % \draw[->] (as1) -> node [left] {$\scalebox{0.7}{.4}$} (s2); 
        % \draw[->] (as1) -> node [left, yshift=0.2cm, xshift=-0.2cm] {$\scalebox{0.7}{0.4}$} (s4); 
        % \draw[->] (bs1) -> node [below] {$\scalebox{0.7}{0.3}$} (s3);
        % % \draw[->] (bs1) -> node [right] {$\scalebox{0.7}{.6}$} (s2);
        % \draw[->] (bs1) -> node [below,xshift=-0.1cm, yshift=0.05cm] {$\scalebox{0.7}{0.1}$} (s4);
        % \draw[->] (s3) -> (exr1);
        % \draw[->] (s4) -> (exr2);
    \end{tikzpicture}
    }
    \caption{An oPG $\opg{C}\colon (1, 1)\rightarrow (1, 0)$. The node $i_{\dr, 1}$ is the entrance, and the nodes $o_{\dr, 1}, o_{\dl, 1}$ are exits. The open ends $\enrarg{1}$ and $\exrarg{1}$ are rightward, and the exit $\exlarg{1}$ is leftward.  The shape of each (internal) node represents the owner, that is, circles are owned by Player $\eve$, and diamonds are owned by Player $\adam$. The labels on the edges are the assigned priorities. }
    \label{fig:running_example}
\end{figure} 

Algebraic operations, namely the \emph{(bidirectional) sequential composition} $\opg{A}\seqcomp \opg{B}$ and the \emph{sum} $\opg{A}\oplus \opg{B}$,  are defined as illustrated in~\cref{fig:string_diagrams}. 
The sequential composition $\opg{A}\seqcomp \opg{B}$ of $\opg{A}$ and $\opg{B}$ is given by connecting their open ends, assuming their arities are consistent. 
Note that the sequential composition can create a cycle along open ends. 
\begin{definition}[sequential composition]
\label{def:seqOPG}
Let $\opg{A}\colon (m_{\dr}, m_{\dl})\rightarrow (l_{\dr}, l_{\dl})$ and $\opg{B}\colon (l_{\dr}, l_{\dl})\rightarrow (n_{\dr}, n_{\dl})$. Their \emph{(bidirectional) sequential composition} $\opg{A}\seqcomp \opg{B}\colon (m_{\dr}, m_{\dl})\rightarrow (n_{\dr}, n_{\dl})$ is an oPG $(V^{\opg{A}}\uplus V^{\opg{B}}, V^{\opg{A}}_{\eve}\uplus V^{\opg{B}}_{\eve}, V^{\opg{A}}_{\adam}\uplus V^{\opg{B}}_{\adam}, E^{\opg{A}\seqcomp \opg{B}}, \Omega^{\opg{A}\seqcomp \opg{B}})$, where the set $E^{\opg{A}\seqcomp \opg{B}}$ of edges is the least relation and $\Omega^{\opg{A}\seqcomp \opg{B}}$ is the function that satisfy the following conditions: 
\begin{itemize}
    \item if $(v, v')\in E^{\opg{A}}$ and $v\not \in \exr^{\opg{A}}$, then $(v, v')\in E$ and $\Omega^{\opg{A}\seqcomp \opg{B}}(v, v') \defeq \Omega^{\opg{A}}(v, v')$,
    \item if $(v, v')\in E^{\opg{B}}$ and $v\not \in \exl^{\opg{B}}$, then $(v, v')\in E$ and $\Omega^{\opg{A}\seqcomp \opg{B}}(v, v') \defeq \Omega^{\opg{B}}(v, v')$, 
    \item $(\exrarg{i}^{\opg{A}}, \enrarg{i}^{\opg{B}})\in E$ and $\Omega^{\opg{A}\seqcomp \opg{B}}(\exrarg{i}^{\opg{A}}, \enrarg{i}^{\opg{B}}) \defeq 0$, for any $i\in [1, l_{\dr}]$, and
    \item $(\exlarg{j}^{\opg{B}}, \enlarg{j}^{\opg{A}})\in E$ and $\Omega^{\opg{A}\seqcomp \opg{B}}(\exlarg{j}^{\opg{B}}, \enlarg{j}^{\opg{A}}) \defeq 0$, for any $j\in [1, l_{\dl}]$.
\end{itemize}    
\end{definition}

We also define the \emph{sum} $\opg{A}\oplus \opg{B}$ of $\opg{A}$ and $\opg{B}$: see~\cref{fig:string_diagrams} for an illustration. 

\begin{definition}[sum]
\label{def:sumOPG}
Let $\opg{A}\colon (m_{\dr}, m_{\dl})\rightarrow (n_{\dr}, n_{\dl})$ and $\opg{B}\colon (k_{\dr}, k_{\dl})\rightarrow (l_{\dr}, l_{\dl})$. Their \emph{sum} $\opg{A}\oplus \opg{B}\colon (m_{\dr}+ k_{\dr}, m_{\dl}+k_{\dl})\rightarrow (n_{\dr} + l_{\dr}, n_{\dl} + l_{\dl})$ is an oPG $(V^{\opg{A}}\uplus V^{\opg{B}}, V^{\opg{A}}_{\eve}\uplus V^{\opg{B}}_{\eve}, V^{\opg{A}}_{\adam}\uplus V^{\opg{B}}_{\adam}, E^{\opg{A}\oplus \opg{B}}, \Omega^{\opg{A}}\uplus \Omega^{\opg{B}})$, where the set $E^{\opg{A}\oplus \opg{B}}$ of edges is the least relation that satisfies the following condition: 
\begin{itemize}
    \item if $(v, v')\in E^{\opg{A}}$, then $(v, v')\in E^{\opg{A}\oplus \opg{B}}$, and if $(v, v')\in E^{\opg{B}}$, then $(v, v')\in E^{\opg{A}\oplus \opg{B}}$.
\end{itemize}
\end{definition}

We define string diagrams of PGs as a syntax of compositional PGs: 
we also introduce \emph{operational semantics} of string diagrams of PGs
as oPGs induced by string diagrams.  

\begin{definition}[string diagram]
\label{def:sd_opgs}
A \emph{string diagram $\sd{D}$ of parity games} is a term inductively generated by
 $\sd{D} ::= \opg{A} \mid \sd{D} \seqcomp \sd{D} \mid \sd{D} \oplus \sd{D}$
where $\opg{A}$ ranges over oPGs.
The \emph{operational semantics} $\semantics{\sd{D}}$ of a string diagram $\sd{D}$ is the oPG inductively given by~\cref{def:seqOPG,def:sumOPG}.
\end{definition}

Finally, we state the target problem, that is, deciding the winner for each entrance on a given string diagram $\sd{D}$ 
whose operational semantics is a PG (oPG that has no exits). 
\begin{mdframed}
\textbf{Problem Statement:}
Let $\sd{D}$ be a string diagram such that the operational semantics $\semantics{\sd{D}}$ can be naturally considered as a PG, that is, $\semantics{\sd{D}}$ has no exits. Compute the winning region $(W_{\eve}, W_{\adam})$ on entrances $\en^{\semantics{\sd{D}}}$, that is, decide the winner of each entrance.   
\end{mdframed}

\begin{remark}[trace operator]
    In~\cite{DBLP:journals/corr/abs-2112-14058,DBLP:conf/cav/WatanabeEAH23,DBLP:journals/corr/abs-2307-08034}, we support the trace operator that makes loops by connecting exits to the corresponding entrances, respectively. 
    We can encode the trace operator by the (bidirectional) sequential composition and the sum, as in~\cite{WVHRJ2024accepted,WatanabeVJH24}. 
    
\end{remark}

\section{Denotations, Optimal Strategies, and Pareto Fronts}
\label{sec:denotation_paretoCurve}

In this section, we recap \emph{denotations}~\cite{DBLP:journals/corr/abs-2112-14058} and introduce \emph{Pareto fronts} for oPGs, which are a compositional extension of winning and losing for PGs.

\subsection{Denotation of Open Parity Games}
\emph{Strategies} and \emph{plays} in oPGs are defined in exactly the same way as in PGs; note that oPGs are PGs with open ends.
Unlike PGs, where the winning condition is \emph{qualitative}, the winning condition in oPGs is \emph{quantitative} due to  the openness of oPGs, which induces \emph{pending states} between two extremes of winning and losing.
Specifically, we show denotations of oPGs~\cite{DBLP:journals/corr/abs-2112-14058} with \emph{denotations} of plays and entrances.
Given a play $p$, the denotation $\semPlay{p}$ of the play $p$ is \emph{winning} ($\top$), \emph{losing} ($\bot$), or reaching an exit $o$ with the priority $m$ that is the maximum priority that occurred during the play.   
Importantly, all exits are sinks, meaning that once we enter an exit, we stay there forever.
\begin{definition}[domain]
    \label{def:domain}
    Let $\opg{A}$ be an oPG. The \emph{domain} $\domain{\opg{A}}$ of $\opg{A}$ is the set  $\domain{\opg{A}}\defeq \{ \bot, \top \}\cup \{ (o, m) \mid \text{$o$ is an exit $(o\in \ex^{\opg{A}})$},\ m \text{ is a priority } (m\in \setPriorities) \}$. 
\end{definition}

\begin{definition}[denotation of play]
    Let $p = (v_j)_{j\in \nat} $ be a (necessarily infinite) play on an oPG $\opg{A}$. The \emph{denotation} $\semPlay{p}\in\domain{\opg{A}}$ of $p$ is given by the following: 
    \begin{align*}
        \semPlay{p}\defeq \begin{cases}
            \top &\text{ if $v_j\not\in \ex^{\opg{A}}$ for any $j\in\nat$, and $p$ satisfies the parity condition,}\\
            (o, m) &\text{ if $v_j = o\in \ex^{\opg{A}}$ for some $j\in \nat$,}\\
            &\text{ and $m \defeq \max\big\{ \Omega\big((v_k, v_{k+1})\big) \ \big |\   k\in \{1,\dots, j-1\} \big\}$, }\\
            \bot &\text{ if $v_j\not\in \ex^{\opg{A}}$ for any $j\in\nat$, and $p$ does not satisfy the parity condition.}
        \end{cases}
    \end{align*}
    Note that the condition in the second case distinction is well-defined
    because we assume every exit is a sink state (\cref{def:open_parity_game}).  
\end{definition}

The denotation $\semEntrance{i}$ of an entrance $i$ is the set of denotations of plays from $i$ that are induced by $\eve$- and $\adam$-strategies.
Specifically, for each $\eve$-strategy $\sigma_{\eve}$, the denotation $\semEntrance{i}$ contains the set of denotations $\semPlay{\indPlay{\sigma_{\eve}}{\sigma_{\adam}}{i}}$ of plays $\indPlay{\sigma_{\eve}}{\sigma_{\adam}}{i}$ that are induced by some $\adam$-strategy $\sigma_{\adam}$. 
Note that $\sigma_{\eve}$ and $\sigma_{\adam}$ may not be positional. 
\begin{definition}[denotation of entrance]
    \label{def:denEntrance}
Let $\opg{A}$ be an oPG and $i$ be an entrance. The \emph{denotation} $\semEntrance{i}$ is given by 
\begin{align*}
    \semEntrance{i}\defeq \Big\{ \big\{\semPlay{\indPlay{\sigma_{\eve}}{\sigma_{\adam}}{i}} \ \big | \ \text{$\sigma_{\adam}$ is a $\adam$-strategy}\big\} \ \Big | \ \text{$\sigma_{\eve}$ is an $\eve$-strategy}\Big\}.
\end{align*}
We call the entrance $i$ \emph{winning} if $\{\top\}\in \semEntrance{i}$, \emph{losing} if $\bot \in T$ holds for any $T\in \semEntrance{i}$, and \emph{pending} otherwise. 
\end{definition}

The denotations of entrances are indeed a compositional extension of winning regions of PGs: in fact, for oPG $\opg{A}$ with no exits, the entrance is winning if $\{\top\}\in \semEntrance{i}$, which is consistent with winning regions on PGs, and otherwise the entrance is losing. 

\begin{definition}[denotation of oPGs~\cite{DBLP:journals/corr/abs-2112-14058}]
    \label{def:denoOPGs}
    Let $\opg{A}$ be an oPG. The \emph{denotation} $\semOPGs{\opg{A}}$ of $\opg{A}$ is the indexed family $\big(\semEntrance{i}\big)_{i\in \en^{\opg{A}}}$ of denotations of entrances. 
\end{definition}

\begin{example}
Consider the oPG $\opg{C}\colon (1, 1)\rightarrow (1, 0)$ presented in~\cref{fig:running_example}.
Since there is the unique entrance $\enrarg{1}$, the denotation $\semOPGs{\opg{C}}$ is essentially same as the denotation $\semEntrance{\enrarg{1}}$.
Every play that starts from the entrance reaches the node $c$:
 if Player $\eve$ decides to move the exit $\exrarg{1}$, then the denotation of the play is $(\exrarg{1}, 1)$, and otherwise, the play reaches the node $d$.
By similar arguments, we can conclude that the denotation  $\semEntrance{\enrarg{1}}$ is the following: 
\begin{align*}
    \semEntrance{\enrarg{1}}=\Big\{ &\big\{ (\exrarg{1}, 1)\big\},\, \big\{ (\exrarg{1}, 2), (\exrarg{1}, 3), (\exlarg{1}, 1),(\exlarg{1}, 3), \bot \big\},\,
    \big\{ (\exrarg{1}, 2), (\exrarg{1}, 3), (\exlarg{1}, 1)\big\},\,\\
    &\big\{ (\exrarg{1}, 2), (\exrarg{1}, 3), (\exlarg{1}, 1), (\exlarg{1}, 3)\big\}
    \Big\}.
\end{align*}
\end{example}

\begin{remark}
   Our denotation of entrances implicitly requires that an $\eve$-strategy $\sigma_{\eve}$ is always chosen at first, and later a $\forall$-strategy is chosen with respect to $\sigma_{\eve}$. 
   This order of choices by Players $\eve$ and $\adam$ does not matter for parity games because parity games are positionally determined~\cite{DBLP:conf/lop/Emerson85,DBLP:conf/focs/EmersonJ91}.
\end{remark}

\subsection{Optimal Strategies of Open Parity Games}
\label{subsec:pareto_curves_opgs}
For deciding the winning region of the entrances in the operational semantics $\semantics{\sd{D}}$,
as stated in~\cref{subsec:sd_pgs}, the denotation $\semOPGs{\opg{A}}$ defined in~\cref{def:denoOPGs} may contain redundant information:
for example, suppose we fix an $\eve$-strategy $\sigma_{\eve}$.
If there is a $\adam$-strategy $\sigma_{\adam}$ such that $\semPlay{\indPlay{\sigma_{\eve}}{\sigma_{\adam}}{i}}\defeq \bot$, that is,
the play $\indPlay{\sigma_{\eve}}{\sigma_{\adam}}{i}$ is losing for Player $\eve$, then we can exclude other $\adam$-strategies $\tau_{\adam}$ because choosing $\sigma_{\adam}$ is optimal for Player $\adam$ when Player $\eve$ chooses $\sigma_{\eve}$. 
We generalize this elimination of sub-optimal strategies by introducing orders on plays and strategies.

Specifically, we use the \emph{sub-priority order}~\cite{DBLP:conf/aplas/FujimaIK13,DBLP:conf/stacs/BerwangerDHK06} on priorities: the intuition is that Player $\eve$ favors larger even priorities, and Player $\adam$ favors larger odd priorities. 

\begin{definition}[sub-priority order~\cite{DBLP:conf/aplas/FujimaIK13,DBLP:conf/stacs/BerwangerDHK06}]
\label{def:sub_priority_order}
The \emph{sub-priority order} $\priorityOrder$ on the set $\setPriorities$ of priorities is the total order defined by $M-1\priorityStOrder M-3 \priorityStOrder \dots \priorityStOrder 1 \priorityStOrder 0 \priorityStOrder 2  \priorityStOrder \dots \priorityStOrder M-2  \priorityStOrder M$.
Recall that we assume that the maximum priority $M$ is even. 
\end{definition}
\begin{remark}
    The existing semantics~\cite{DBLP:conf/mfcs/GrelloisM15,DBLP:journals/corr/abs-2112-14058} uses the trivial discrete order on the set   $\setPriorities$, that is, $m_1\priorityOrder m_2$ iff $m_1 = m_2$. 
    This difference is crucial for our development. 
\end{remark}

The operation $\max\colon \setPriorities\times \setPriorities\rightarrow \setPriorities$ that returns a bigger number with respect to the standard order on natural numbers is (also) monotone 
with respect to the sub-priority order. 

\begin{lemma}[monotonicity~\cite{DBLP:conf/csl/TsukadaO14}]
    \label{lem:monoPriority}
    Let $m_1, m_2, m_3\in \setPriorities$. 
    If the inequality $m_1\priorityOrder m_2$ holds, then the inequality $\max(m_1, m_3)\priorityOrder \max(m_2, m_3)$ holds. 
\end{lemma}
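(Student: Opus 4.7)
The plan is to prove the lemma by a direct case analysis, using an explicit characterization of the sub-priority order together with the standard order on~$\mathbb{N}$. First I would rewrite~$\priorityOrder$ as follows: for $m, m' \in \setPriorities$, $m \priorityOrder m'$ iff $m = m'$, or $m$ is odd while $m'$ is even, or $m$ and $m'$ are both even with $m \le m'$ in the standard order, or $m$ and $m'$ are both odd with $m \ge m'$ in the standard order. From this and the hypothesis $m_1 \priorityOrder m_2$, I would record that if $m_1$ is even then $m_2$ is even with $m_1 \le m_2$, whereas if $m_1$ is odd then $m_2$ is either odd with $m_2 \le m_1$ or even (with no further constraint).

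Next, I would split on the parity of~$m_3$, giving six main cases. For each, the goal is to compute $\max(m_1, m_3)$ and $\max(m_2, m_3)$ in the standard order, determine their parities, and then check that the resulting pair still satisfies the characterization above. As a representative example: when $m_1$ is odd, $m_2$ is even, and $m_3$ is odd, the value $\max(m_1, m_3)$ is odd, while $\max(m_2, m_3)$ is odd if $m_3 > m_2$ (in which case $\max(m_1, m_3) \ge m_3 = \max(m_2, m_3)$, yielding the required sub-priority inequality between two odds) and is even otherwise (so the pair is (odd, even), again in~$\priorityOrder$). The remaining five parity patterns are handled by exactly the same pattern, with one further split in each case according to which of the three numbers attains the standard maximum.

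The main obstacle is simply bookkeeping, since the standard $\max$ interacts non-trivially with the alternating odd/even structure of~$\priorityOrder$ and the case split cannot be replaced by a one-line computation. One could try to encode the order by a monotone map $\phi \colon \setPriorities \to \mathbb{Z}$, say $\phi(m) = m$ for even $m$ and $\phi(m) = -m$ for odd $m$, so that $m \priorityOrder m'$ iff $\phi(m) \le \phi(m')$; but $\phi(\max(a, b))$ is not determined by $\phi(a)$ and $\phi(b)$, so this reformulation does not shortcut the analysis. Once all subcases are dispatched, monotonicity of $\max$ with respect to~$\priorityOrder$ follows.
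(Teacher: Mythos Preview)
Your case analysis is correct. The paper does not give its own proof of this lemma: it is stated with a citation to Tsukada and Ong and then used as a known fact, so there is no argument to compare against. Your explicit characterization of $\priorityOrder$ (odd below even, odds reversed, evens as usual) together with the parity split on $m_1, m_2, m_3$ is the natural elementary verification; the representative subcase you spell out is handled correctly, and the remaining subcases go through by the same routine. Nothing is missing.
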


We extend the sub-priority order to a partial order on the domain $\domain{\opg{A}}$ considering exits. 
\begin{definition}[domain order]
    The \emph{domain order} $\domainOrder$ on the domain $\domain{\opg{A}}$ (defined in~\cref{def:domain}) is the partial order such that $ d_1\domainOrder d_2$ iff (i) $d_1 = \bot$, (ii) $d_2 = \top$, or (iii) $d_1\defeq (o_1, m_1)$ and $d_2\defeq (o_2, m_2)$ such that $o_1 = o_2$ and $m_1\priorityOrder m_2$.
\end{definition}

Given an $\eve$-strategy $\sigma_{\eve}$,  we define \emph{optimal $\adam$-strategies} w.r.t. $\sigma_{\eve}$ as $\adam$-strategies that induce plays whose denotations are minimal.   
\begin{definition}[minimal]
    For a set $T\subseteq \domain{\opg{A}}$, the set $\worst{T}$ is the set of minimal elements in $T$ w.r.t. the domain order $\domainOrder$. 
\end{definition}

\begin{definition}[optimal $\adam$-strategy]
    Let $\opg{A}$ be an oPG, $i$ be an entrance, and $\sigma_{\eve}$ be an $\eve$-strategy.
    A $\adam$-strategy $\sigma_{\adam}$ is \emph{optimal} on $i$ w.r.t. $\sigma_{\eve}$ if the following condition holds: 
    \begin{equation*}
        \semPlay{\indPlay{\sigma_{\eve}}{\sigma_{\adam}}{i}} \in \worst{\big\{ \semPlay{\indPlay{\sigma_{\eve}}{\tau_{\adam}}{i}}  \  \big | \ \tau_{\adam}\text{ is a $\adam$-strategy}\big\}}. 
    \end{equation*}
\end{definition}

We further define optimal $\eve$-strategies; here we do not fix a $\adam$-strategy. 
Specifically, we use the \emph{upper preorder}~\cite{DBLP:conf/lics/Abramsky87}.
\begin{definition}[upper preorder~\cite{DBLP:conf/lics/Abramsky87}]
    The \emph{upper preorder} $\upperOrder$ on the powerset $\mathcal{P}(\domain{\opg{A}})$ of $\domain{\opg{A}}$ is given by 
    $T_1\upperOrder T_2$ if for any $d_2\in T_2$, there is $d_1\in T_1$ such that $d_1\domainOrder d_2$. 
\end{definition}
The intuition is the following: we fix an entrance $i$ and suppose that there are two $\eve$-strategies $\sigma_{\eve}$ and $\tau_{\eve}$. 
The $\eve$-strategy $\sigma_{\eve}$ is ``better'' than $\tau_{\eve}$ 
if for any $\adam$-strategy $\sigma_{\adam}$, there is a $\adam$-strategy $\tau_{\adam}$ such that 
$\indPlay{\tau_{\eve}}{\tau_{\adam}}{i} \playOrder \indPlay{\sigma_{\eve}}{\sigma_{\adam}}{i}$. 

\begin{definition}[maximal]
   For a set $S\subseteq \mathcal{P}(\domain{\opg{A}})$, the set 
$\best{S}$ is the set of maximal elements in $S$ w.r.t. the upper preorder $\upperOrder$.     
\end{definition}

\begin{definition}[optimal $\eve$-strategy]
    Let $\opg{A}$ be an oPG, and $i$ be an entrance.
    An $\eve$-strategy $\sigma_{\eve}$ is \emph{optimal} on $i$ if the following condition holds: 
    \begin{equation*}
        \worst{\big\{ \semPlay{\indPlay{\sigma_{\eve}}{\sigma_{\adam}}{i}}  \  \big | \ \text{a $\adam$-strategy }\sigma_{\adam}\big\}}\in  \best{\Big\{\worst{\big\{ \semPlay{\indPlay{\tau_{\eve}}{\tau_{\adam}}{i}}  \  \big | \ \text{a $\adam$-strategy }\tau_{\adam}\big\}} \ \Big | \ \text{an $\eve$-strategy } \tau_{\eve} \Big\}}.
    \end{equation*}
\end{definition}

We show an example of optimal $\eve$-strategies in~\cref{ex:paretoCurve} later.

\subsection{Pareto Fronts of Open Parity Games}
\label{subsec:pareto_curves_for_oPGs}
We conclude this section by introducing \emph{Pareto fronts} of oPGs.
Roughly speaking, Pareto fronts are the set of \emph{results} that are induced by optimal strategies.  

\begin{definition}[results, result order~\cite{DBLP:conf/aplas/FujimaIK13}]
    \label{def:results_order}
    Let $\opg{A}$ be an oPG. The set $\setPoints{\opg{A}}$ of \emph{results} is the set of non-empty antichains $\result$ on the domain $\domain{\opg{A}}$, that is, $\result \subseteq \domain{\opg{A}}$ and $\worst{\result} = \result$. 
    The \emph{result order} $\resultOrder{\opg{A}}$ on $\setPoints{\opg{A}}$ is the partial order that is given by the upper preorder $\upperOrder$.
\end{definition}
The result order is indeed a partial order because of the minimality of results. 
\begin{proposition}
    \label{prop:antisymmetry}
    The result order is a partial order. 
    \end{proposition}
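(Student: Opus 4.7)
The plan is to verify the three axioms of a partial order—reflexivity, transitivity, and antisymmetry—of $\resultOrder{\opg{A}}$ on the set $\setPoints{\opg{A}}$ of results. Reflexivity and transitivity are actually properties of the upper preorder $\upperOrder$ on the whole powerset $\mathcal{P}(\domain{\opg{A}})$, so they do not rely on the antichain restriction: reflexivity is witnessed by taking $d_1 \defeq d_2$ and invoking reflexivity of $\domainOrder$, and transitivity follows by composing witnesses $d_1 \domainOrder d_2 \domainOrder d_3$ and appealing to transitivity of $\domainOrder$.

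The interesting axiom is antisymmetry, which is precisely where the antichain condition $\worst{\result} = \result$ in \cref{def:results_order} does work. Suppose $\result_1, \result_2 \in \setPoints{\opg{A}}$ satisfy $\result_1 \resultOrder{\opg{A}} \result_2$ and $\result_2 \resultOrder{\opg{A}} \result_1$. I would show $\result_1 = \result_2$ by proving $\result_1\subseteq \result_2$ (the other inclusion being symmetric). Pick any $d_1 \in \result_1$. By $\result_2 \upperOrder \result_1$ there exists $d_2 \in \result_2$ with $d_2 \domainOrder d_1$, and by $\result_1 \upperOrder \result_2$ applied to this $d_2$, there exists $d'_1 \in \result_1$ with $d'_1 \domainOrder d_2$. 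Transitivity of $\domainOrder$ then yields $d'_1 \domainOrder d_1$ with $d'_1, d_1 \in \result_1$. Since $\result_1$ is an antichain, this forces $d'_1 = d_1$, which in turn sandwiches $d_1 = d_2$ and hence $d_1 \in \result_2$, as desired.

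The only subtle point, and the one step that really uses the hypothesis that results are antichains rather than arbitrary subsets, is the collapse $d'_1 \domainOrder d_1 \Rightarrow d'_1 = d_1$ in the antisymmetry argument; without it, $\upperOrder$ is only a preorder (two different non-minimal sets can be mutually $\upperOrder$-comparable). I do not anticipate any real obstacle: once the antichain property is invoked at this one spot, the rest is routine order-theoretic bookkeeping.
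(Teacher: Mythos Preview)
Your proposal is correct and follows essentially the same argument as the paper's proof: the paper focuses solely on antisymmetry, picking $d_1\in\result$, chasing down to $d_2\in\result'$ and then to $d_3\in\result$ with $d_3\domainOrder d_2\domainOrder d_1$, and invoking the antichain property to collapse $d_3=d_1$ and hence $d_1=d_2$. Your treatment is slightly more explicit in that you also spell out reflexivity and transitivity as inherited from the preorder, which the paper leaves implicit.
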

\begin{proof}
Suppose that $\result\resultOrder{\opg{A}}\result'$ and $\result'\resultOrder{\opg{A}}\result$. 
It suffices to prove that $\result \subseteq \result'$ due to the symmetry of $\result$ and $\result'$. 
For any $d_1\in \result$, there is $d_2\in \result'$ such that $d_2 \domainOrder d_1$. 
Similarly, there is $d_3\in \result$ such that $d_3\domainOrder d_2$. 
Thus, we have $d_3 \domainOrder d_1$ and we see $d_3 = d_1$ due to the 
minimality of $\result$. Thus, $d_1 = d_2 = d_3$ holds and $\result \subseteq \result'$ holds. 
\end{proof}
We finally define \emph{Pareto fronts} of oPGs by optimal $\eve$- and $\adam$-strategies. 
\begin{definition}[Pareto front, Pareto-optimal]
Let $\opg{A}$ be an oPG, and $i$ be an entrance. The \emph{Pareto front} $\paretoCurve{\opg{A}}{i}$ is the set of denotations of plays that are induced by optimal $\eve$-strategies and optimal $\adam$-strategies on $i$, that is, 
\begin{align*}
    \paretoCurve{\opg{A}}{i}\defeq \best{\Big\{ \worst{\big\{ \semPlay{\indPlay{\sigma_{\eve}}{\sigma_{\adam}}{i}}  \  \big | \ \sigma_{\adam}\text{ is a $\adam$-strategy}\big\}} \ \Big |\  \sigma_{\eve}\text{ is an $\eve$-strategy} \Big\}} \subseteq \setPoints{\opg{A}}. 
\end{align*}
We call results $\result $ in $\paretoCurve{\opg{A}}{i}$ \emph{Pareto-optimal}. 
We write $\paretoCurves{\opg{A}}$ for the indexed family $\big(\paretoCurve{\opg{A}}{i}\big)_{i\in \en^{\opg{A}}}$ of Pareto fronts.  Note that $\paretoCurve{\opg{A}}{i} \not = \emptyset$ and $\emptyset\not \in \paretoCurve{\opg{A}}{i}$ hold by definition. 
\end{definition}

\begin{example}
    \label{ex:paretoCurve}
Consider the oPG $\opg{C}\colon (1, 1)\rightarrow (1, 0)$ presented in~\cref{fig:running_example}. 
The denotation $\semEntrance{\enrarg{1}}$ contains $\big\{ (\exrarg{1}, 2), (\exrarg{1}, 3), (\exlarg{1}, 1)\big\}$. 
Since $(\exrarg{1}, 3)\domainOrder (\exrarg{1}, 2)$, we have a result $\worst{\big\{ (\exrarg{1}, 2), (\exrarg{1}, 3), (\exlarg{1}, 1)\big\}} = \big\{ (\exrarg{1}, 3), (\exlarg{1}, 1)\big\}$.
This result is not Pareto-optimal because the inequality $ \big\{ (\exrarg{1}, 3), (\exlarg{1}, 1)\big\} \upperOrder \big\{ (\exrarg{1}, 1)\big\}$ holds, 
where  $\big\{ (\exrarg{1}, 1)\big\}\in \semEntrance{\enrarg{1}}$. 
By continuing similar arguments, we can show that the Pareto front $\paretoCurve{\opg{C}}{\enrarg{1}}$ is $
  \Big\{ \big\{ (\exrarg{1}, 1)\big\}
    \Big\}.
$
Thus, an optimal $\eve$-strategy on $\opg{C}$ is positionally moving from $c$ to $\exrarg{1}$. 
\end{example}

\begin{remark}[meager semantics]
In~\cite{DBLP:journals/corr/abs-2307-08034}, we introduce \emph{meager semantics} to eliminate sub-optimal strategies.
Meager semantics is defined \emph{globally}, meaning that optimal strategies are uniformly defined over entrances.
Pareto fronts are defined \emph{locally}, with optimal strategies tailored specifically for each entrance. This locality is essential for our technical development in~\cref{sec:solving_oPGs,sec:solvingSD}. 
    
\end{remark}

Pareto fronts of oPGs characterise the winning condition of oPGs defined in~\cref{def:denEntrance}.
\begin{lemma}
\label{lem:prop_paretoCurve}
Let $\opg{A}$ be an oPG, and $i$ be an entrance. 
The Pareto front $\paretoCurve{\opg{A}}{i}$ satisfies the following properties: 
\begin{enumerate}
    \item the entrance $i$ is winning iff  $\paretoCurve{\opg{A}}{i}=  \big\{\{\top\}\big\}$, 
    \item the entrance $i$ is losing iff $\paretoCurve{\opg{A}}{i} =  \big\{\{\bot\}\big\}$, and 
    \item the entrance $i$ is pending iff for any $T\in \paretoCurve{\opg{A}}{i}$, $\top, \bot\not\in T$, and  for any $(o_1, m_1), (o_2, m_2)\in T$, if  $o_1 = o_2$, then $m_1 = m_2$. 
\end{enumerate}
\end{lemma}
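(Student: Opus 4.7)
\begin{proofs}
The plan is to argue all three items by unwinding the definitions of $\paretoCurve{\opg{A}}{i}$, $\worst{\cdot}$, $\best{\cdot}$, and the winning/losing/pending conditions from \cref{def:denEntrance}, using a few elementary facts about the domain order and the upper preorder.

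First I will establish three structural observations. (a) In $\domainOrder$, the element $\bot$ is the global minimum and $\top$ is the global maximum, so any non-empty antichain containing $\bot$ (resp.\ $\top$) must equal $\{\bot\}$ (resp.\ $\{\top\}$). In particular, for any result $\result\in\setPoints{\opg{A}}$ containing two elements $(o_1,m_1),(o_2,m_2)$ with $o_1=o_2$, the antichain property forces $m_1=m_2$. (b) In the upper preorder, $\{\bot\}\upperOrder T$ for every non-empty $T\subseteq\domain{\opg{A}}$ (take $\bot$ as the witness), and symmetrically $T\upperOrder\{\top\}$ for every such $T$; thus $\{\bot\}$ is the minimum and $\{\top\}$ the maximum among non-empty results. (c) For any set $T\subseteq\domain{\opg{A}}$, we have $\worst{T}=\{\bot\}$ iff $\bot\in T$. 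These are one-line consequences of \cref{def:sub_priority_order} and the definition of $\upperOrder$.

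For item (1), if $i$ is winning then some $\eve$-strategy $\sigma_{\eve}$ satisfies $\{\semPlay{\indPlay{\sigma_{\eve}}{\sigma_{\adam}}{i}}\mid\sigma_{\adam}\}=\{\top\}$, whose minimum set is $\{\top\}$; by (b), $\{\top\}$ is the unique maximum, so $\paretoCurve{\opg{A}}{i}=\{\{\top\}\}$. Conversely, $\{\top\}\in\paretoCurve{\opg{A}}{i}$ means there exists $\sigma_{\eve}$ with $\worst{\{\semPlay{\indPlay{\sigma_{\eve}}{\sigma_{\adam}}{i}}\mid\sigma_{\adam}\}}=\{\top\}$, which by (a) forces $\{\semPlay{\indPlay{\sigma_{\eve}}{\sigma_{\adam}}{i}}\mid\sigma_{\adam}\}=\{\top\}$ and hence $i$ is winning. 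For item (2), if $i$ is losing then every set in $\semEntrance{i}$ contains $\bot$, so by (c) every $\worst{\cdots}$ equals $\{\bot\}$, whence $\paretoCurve{\opg{A}}{i}=\{\{\bot\}\}$. Conversely, if $\paretoCurve{\opg{A}}{i}=\{\{\bot\}\}$ but some $\eve$-strategy yielded a set $T$ with $\bot\notin T$, then $\worst{T}$ would contain no $\bot$ and, by (b) applied in both directions together with \cref{prop:antisymmetry}, we would have $\{\bot\}\upperOrder\worst{T}$ strictly; this would contradict maximality of $\{\bot\}$ in $\paretoCurve{\opg{A}}{i}$, so $i$ must be losing.

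Item (3) then follows as the residual case. If $i$ is pending, then by items (1) and (2), $\paretoCurve{\opg{A}}{i}$ is neither $\{\{\top\}\}$ nor $\{\{\bot\}\}$; observation (a) combined with the arguments above shows that no $T\in\paretoCurve{\opg{A}}{i}$ can contain $\top$ or $\bot$ (otherwise $T$ would equal $\{\top\}$ or $\{\bot\}$ and by (b) would collapse the Pareto front), and the antichain property of results gives the unique-priority-per-exit condition. The converse direction is immediate from items (1) and (2).

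The calculations are routine; the main thing to watch is the direction of $\upperOrder$, namely that $\{\bot\}$ is the least and $\{\top\}$ the greatest element, which is the one point where a sign error would invalidate the whole argument.
\end{proofs}
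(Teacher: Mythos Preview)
Your proposal is correct and follows essentially the same approach as the paper: both arguments rest on the facts that $\bot$ and $\top$ are the least and greatest elements of $\domainOrder$, that this lifts to $\{\bot\}$ and $\{\top\}$ being least and greatest under $\upperOrder$, and that elements of a Pareto front are antichains. The paper's proof only establishes the forward direction of each biconditional and implicitly relies on the three cases being exhaustive and the three conclusions mutually exclusive; your version is more explicit in arguing the converse directions and in isolating the structural observations (a)--(c) up front, which makes the sketch cleaner, but the underlying reasoning is the same.
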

\begin{proof}

    Suppose that the entrance $i$ is winning. By definition, $\{\top\}\in \semEntrance{i}$ holds. 
    Since $\{\top\}$ is the greatest element, $\paretoCurve{\opg{A}}{i}=\big\{\{\top\}\big\}$. 
    
    Suppose that the entrance $i$ is losing. 
    By definition, $\bot \in T$ holds for any $T\in \semEntrance{i}$.
    This means that for any optimal $\adam$-strategy $\sigma_{\adam}$ of a given $\eve$-strategy $\sigma_{\eve}$, the denotation $\semPlay{\indPlay{\sigma_{\eve}}{\sigma_{\adam}}{i}}$ is $\bot$ because $\bot$ is the least element in the domain, which proves $\paretoCurve{\opg{A}}{i}=\big\{\{\bot\}\big\}$.
    
    Suppose that the entrance $i$ is pending. 
    For any $T\in \paretoCurve{\opg{A}}{i}$, $\top\not\in T$  because $\top$ is the greatest element. 
    Similarly, $\bot\not\in T$ because $T = \{\bot\}$ leads to the contradiction. If $(o, m_1), (o, m_2)\in T$, $m_1 = m_2$ holds because of the minimality of $T$. 
    \end{proof}

By the characterisation shown in~\cref{lem:prop_paretoCurve}, 
it suffices to compute the Pareto fronts of operational semantics of string diagrams of parity games for solving our target problem shown in~\cref{subsec:sd_pgs}. 
We thus extend our target problem as follows: 
\begin{mdframed}
    \textbf{Compositional Problem Statement:}
    Let $\sd{D}$ be a string diagram of PGs, and $i$ be an entrance. Compute the Pareto front $\paretoCurve{\semantics{\sd{D}}}{i}$ of the operational semantics $\semantics{\sd{D}}$.   
\end{mdframed}

\section{Solving Open Parity Games}
\label{sec:solving_oPGs}
In this section, we prove the positional determinacy of Pareto fronts through a novel translation from an oPG to a PG specified by a given \emph{query}, a method we call the \emph{loop construction}.  
Utilizing the loop construction, we provide a simple algorithm for computing Pareto fronts. 
Additionally, we present an analysis of the upper-bound of its time complexity. 

\subsection{Positional Determinacy of Pareto Fronts}
\label{subsec:positionalDeterminacy}

We begin by stating the positional determinacy of Pareto fronts: positional strategies suffice to determine Pareto fronts. 

\begin{theorem}[positional determinacy]
\label{thm:positionalDeterminacy}
    Pareto fronts  are positionally determined, that is,
     for any entrance $i$, the following equality holds: 
     \begin{align*}
        \paretoCurve{\opg{A}}{i}  =  \best{\Big\{ \worst{\big\{ \semPlay{\indPlay{\sigma_{\eve}}{\sigma_{\adam}}{i}}  \  \big | \ \text{a positional $\adam$-strategy }\sigma_{\adam}\big\}} \ \Big |\  \text{a positional $\eve$-strategy }\sigma_{\eve} \Big\}}. 
     \end{align*}
\end{theorem}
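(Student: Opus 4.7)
The plan is to reduce the positional determinacy of Pareto fronts to the classical positional determinacy of parity games~\cite{DBLP:conf/lop/Emerson85,DBLP:conf/focs/EmersonJ91} through the loop construction promised in \cref{subsec:loopConst}. The right-to-left inclusion of the claimed equality is routine, since positional strategies are strategies and both sides take $\best{\cdot}$ of antichains in $\setPoints{\opg{A}}$; so the nontrivial direction is showing that every $\result \in \paretoCurve{\opg{A}}{i}$ is realized by a pair of positional strategies.

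Fix a Pareto-optimal $\result \in \paretoCurve{\opg{A}}{i}$ witnessed by a (possibly history-dependent) $\eve$-strategy $\sigma_\eve$, and think of $\result$ as a collection of \emph{single-objectives}: for each $d \in \result$ (which is $\top$, $\bot$, or a pair $(o,m)$), one asks whether $\eve$ can guarantee that every play ends with denotation $\domainOrder$-comparable with $d$. The loop construction is designed to turn such a single-objective query into an ordinary parity game $\pg{G}_d$ on (essentially) the same underlying graph: concretely, each sink exit of $\opg{A}$ is replaced by a self-loop whose priority is determined by $d$ and the sub-priority order $\priorityOrder$, so that the parity condition on $\pg{G}_d$ captures exactly the single-objective. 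Once this specification is in hand, classical positional determinacy applied to $\pg{G}_d$ yields a positional $\eve$-strategy $\sigma^d_\eve$ and a positional $\adam$-strategy $\sigma^d_\adam$ realizing the single-objective in $\opg{A}$.

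The main obstacle is synthesizing the family $\{(\sigma^d_\eve, \sigma^d_\adam)\}_{d \in \result}$ into a single pair of positional strategies whose induced worst-set is exactly $\result$. A naive pointwise combination does not work, since different $d$'s may prescribe conflicting moves at the same $\eve$-node. The approach is instead to pick a single ``tightest'' query that encodes the antichain $\result$ as a whole, exploiting that (i) by \cref{lem:prop_paretoCurve} distinct elements of $\result$ involve distinct exits (or are $\top/\bot$), (ii) the sub-priority order on $\setPriorities$ is total, and (iii) the loop construction is compatible with duality of queries via the hinted \cref{lem:dualQuery}. Together these let one design a loop construction whose parity winner in the classical sense corresponds precisely to the $\upperOrder$-maximal worst-set in the oPG; applying positional determinacy once to this aggregated PG then delivers a single positional $\eve$-strategy (and, symmetrically for the complementary query, a single positional $\adam$-strategy) whose joint worst-set equals $\result$.

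The delicate point to get right is this compatibility: one must verify that the sub-priority order, rather than the discrete order used in prior semantics~\cite{DBLP:conf/mfcs/GrelloisM15,DBLP:journals/corr/abs-2112-14058}, makes the $\max$-aggregation along a play faithfully track the intended threshold at the exits (which is exactly the monotonicity provided by \cref{lem:monoPriority}). Once this is in place, the two inclusions combine to give the theorem, with the $\adam$ side following from the same loop construction applied to the dualised query after fixing the positional $\sigma_\eve$.
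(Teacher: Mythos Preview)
Your high-level plan matches the paper's: encode each Pareto-optimal $\result$ as a \emph{single} query, feed it to the loop construction, and invoke classical positional determinacy on the resulting parity game to extract a positional $\eve$-strategy realizing $\result$. The initial detour through per-element objectives $d\in\result$ before aggregating into one query is unnecessary exposition but harmless.

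There is, however, a genuine gap on the $\adam$ side. You declare the right-to-left inclusion ``routine, since positional strategies are strategies'', but restricting $\adam$ to positional strategies \emph{shrinks} the inner set $\{\semPlay{\indPlay{\sigma_{\eve}}{\sigma_{\adam}}{i}}\mid\sigma_{\adam}\}$ and hence can only \emph{raise} its $\worst{(\cdot)}$ in the upper preorder; so a member of the right-hand side need not appear on the left at all, let alone be maximal there. The paper closes this via \cref{lem:weakHalfPosDet}, an elementary cycle-elimination argument (independent of the loop construction) showing that once $\sigma_{\eve}$ is positional, positional $\adam$-strategies already realize the full worst-set. Your final remark---``the $\adam$ side following from the same loop construction applied to the dualised query''---does not supply this: the paper's treatment of $\adam$ is a direct shortcut of repeated vertices in the induced one-player play, not a dual query construction, and it is needed both to finish the $\eve$-direction (to argue that the positional $\sigma_{\eve}$ extracted from $\edgeConstructor{\opg{A}}{i}{q}$ yields exactly $\queryAnswer{q}$ rather than something strictly above it) and inside the proof of \cref{lem:correctness_loop_construction} itself.
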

Once we establish the positional determinacy of Pareto fronts,
it becomes clear that solving Pareto fronts is decidable through the enumeration of all positional strategies.

We use the following weaker property for proving~\cref{thm:positionalDeterminacy}. 

\begin{lemma}
\label{lem:weakHalfPosDet} 
Let $\opg{A}$ be an oPG, $i$ be an entrance, and $\sigma_{\eve}$ be a positional $\eve$-strategy. 
Then, positional $\adam$-strategies suffice, that is,
the following equality holds: 
\begin{align*}
    \worst{\big\{ \semPlay{\indPlay{\sigma_{\eve}}{\sigma_{\adam}}{i}}  \  \big | \ \sigma_{\adam}\text{ is a $\adam$-strategy}\big\}} = \worst{\big\{ \semPlay{\indPlay{\sigma_{\eve}}{\sigma_{\adam}}{i}}  \  \big | \ \sigma_{\adam}\text{ is a positional $\adam$-strategy}\big\}}.
\end{align*}
\end{lemma}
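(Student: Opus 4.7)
Fix the positional $\sigma_{\eve}$ throughout. Once $\eve$'s moves are committed, the oPG reduces to a one-player arena where only $\adam$ genuinely chooses. Abbreviate $S := \big\{\semPlay{\indPlay{\sigma_{\eve}}{\sigma_{\adam}}{i}} \mid \sigma_{\adam}\text{ is a $\adam$-strategy}\big\}$ and let $S_p$ denote the same set restricted to positional $\sigma_{\adam}$, so that the lemma asks for $\worst{S} = \worst{S_p}$. My plan is to derive both inclusions from the single pointwise improvement $(\star)$: \emph{for every $\adam$-strategy $\sigma_{\adam}$ there exists a positional $\tau_{\adam}$ such that $\semPlay{\indPlay{\sigma_{\eve}}{\tau_{\adam}}{i}} \domainOrder \semPlay{\indPlay{\sigma_{\eve}}{\sigma_{\adam}}{i}}$.}

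Granting $(\star)$, the two inclusions are formal. For $d \in \worst{S}$ realised by some $\sigma_{\adam}$, the positional witness $\tau_{\adam}$ from $(\star)$ yields $d' \in S_p \subseteq S$ with $d' \domainOrder d$, and minimality of $d$ in $S$ together with antisymmetry of $\domainOrder$ force $d' = d$; hence $d \in S_p$, and any hypothetical $d'' \in S_p$ strictly $\domainOrder$-below $d$ would also lie in $S$ and contradict minimality, so $d \in \worst{S_p}$. Conversely, if $d \in \worst{S_p}$ were not minimal in $S$, some $d'' \in S$ with $d'' \prec d$ would exist and, by $(\star)$ applied to a strategy realising $d''$, would yield a positional outcome strictly $\domainOrder$-below $d$ in $S_p$, contradicting minimality there.

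To establish $(\star)$ I case-split on $d := \semPlay{\indPlay{\sigma_{\eve}}{\sigma_{\adam}}{i}}$. The case $d = \top$ is immediate since $\top$ is the top of $\domainOrder$, so any positional $\tau_{\adam}$ works. When $d = \bot$, the induced play is infinite, stays in $V \setminus \ex^{\opg{A}}$, and fails the parity condition; treating the exits as a ``win-for-$\eve$'' sink, $\adam$'s $\bot$-winning region in the resulting one-player parity game contains $i$, and classical positional determinacy of parity games supplies a positional $\tau_{\adam}$ whose induced play from $i$ stays infinite and parity-losing, i.e.\ has denotation $\bot$.

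The crux is $d = (o, m)$. If $\adam$ can already force $\bot$ positionally from $i$, that strategy works since $\bot \domainOrder (o, m)$; otherwise I need a positional $\tau_{\adam}$ whose play reaches $o$ with standard maximum priority $m'$ satisfying $m' \priorityOrder m$. My plan is to encode this goal as the parity objective of an auxiliary one-player parity game built by a $\priorityOrder$-aware relabelling of edge priorities (leveraging \cref{lem:monoPriority}) and then to transport a positional winning strategy back to $\adam$. The \textbf{main obstacle} lies precisely in this relabelling: because the sub-priority order disagrees with the natural order on $\setPriorities$, naive loop-cutting of the witness path may destroy a required odd maximum when $m$ is odd, or introduce a forbidden large even priority when $m$ is even, flipping the outcome to some $(o, m')$ with $m' \succPriorityOrder m$ or to a $\top$-type cycle. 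Designing the auxiliary parity game so that its winning condition is \emph{exactly} ``reach $o$ with sub-priority-maximum $\priorityOrder m$, or produce $\bot$''---without circularly invoking \cref{thm:positionalDeterminacy}---is where the real work lies.
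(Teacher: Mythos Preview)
Your reduction to $(\star)$ and the handling of $d\in\{\top,\bot\}$ are fine, but the $(o,m)$ case is left genuinely open, and you have talked yourself out of the argument that actually works. The paper does \emph{not} build an auxiliary parity game here; it uses precisely the ``naive loop-cutting'' you rejected, after one simplification you overlooked: it suffices to establish $(\star)$ only for an \emph{optimal} $\sigma_{\adam}$, i.e.\ one whose outcome lies in $\worst{S}$. (Your formal derivation of both inclusions goes through unchanged with this restricted $(\star)$, since any $d\in S$ is $\domainOrder$-above some element of $\worst{S}$.)

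With optimality in hand, your two worries evaporate. Write the play to the exit as $v_1,\dots,v_n$ with $v_n=o$, and suppose $v_k=v_l$ for some $k<l$; since $\sigma_{\eve}$ is positional we may take this repeated node to be owned by $\adam$. If the maximum priority on the segment $v_k,\dots,v_l$ is \emph{odd}, then against the same positional $\sigma_{\eve}$ Player $\adam$ can repeat this segment forever and obtain $\bot$, contradicting optimality of the outcome $(o,m)$---so this case cannot occur. If that maximum is \emph{even}, shortcut to $v_1,\dots,v_k,v_{l+1},\dots,v_n$: this is again a play under some $\adam$-strategy, and its denotation is unchanged. Indeed, either the global maximum already lies outside the loop (so removing the loop does nothing), or the global maximum equals the even loop-maximum; in the latter case a strict drop after shortcutting would give $\adam$ a strictly $\domainOrder$-smaller outcome (any priority below an even one is $\priorityOrder$-smaller), again contradicting optimality. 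Iterating removes all repetitions and yields a simple path to $o$ with denotation exactly $(o,m)$, from which a positional $\tau_{\adam}$ is read off directly. No auxiliary relabelling or encoding is needed.
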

\begin{proof}
    We take a (finite) oPG $\opg{A}[\sigma_{\eve}]$ that is induced by the positional $\eve$-strategy $\sigma_{\eve}$;
    the nodes and edges of $\opg{A}[\sigma_{\eve}]$ are the same of $\opg{A}$, except the edges from nodes of Player $\eve$ are determined by $\sigma_{\eve}$. 
    Take an optimal $\adam$-strategy $\sigma_{\adam}$ on $\opg{A}[\sigma_{\eve}]$ from $i$, which is also optimal on $\opg{A}$ w.r.t. $\sigma_{\eve}$. 
    If the play $\indPlay{\sigma_{\eve}}{\sigma_{\adam}}{i}$ is losing, then there is a positional optimal $\adam$ strategy by the positional determinacy of PGs. 
    
    Suppose that the play $\indPlay{\sigma_{\eve}}{\sigma_{\adam}}{i}$ reaches an exit $o$. 
    We write the sequence $v_1, v_2, \dots, v_n$ of nodes for the play $\indPlay{\sigma_{\eve}}{\sigma_{\adam}}{i}$, 
    where $n$ is the smallest index such that $v_n = o$. 
    If the equality $v_j = v_j$ implies $i = j$, 
    then we can easily construct a positional $\adam$-strategy $\tau_{\adam}$ such that  $\indPlay{\sigma_{\eve}}{\sigma_{\adam}}{i} =  \indPlay{\sigma_{\eve}}{\tau_{\adam}}{i}$.
    
    The remaining case is that there are $v_i = v_j$ such that $i \not = j$.
    Since $\sigma_{\eve}$ is positional, we can assume that $v_i$ and $v_j$ are owned by Player $\adam$: 
    if every $v_i = v_j$ such that $i \not = j$ is owned by Player $\eve$, it will not reach an exit, which leads to the contradiction. 
    We assume that $i < j$ without loss of generality. 
    If the maximum priority $m$ occurs in $v_i,\dots ,v_j$ is odd, 
    then it contradicts to the fact that $\sigma_{\adam}$ is optimal. 
    If the maximum priority $m$ occurs in $v_i,\dots ,v_j$ is even,
    then the \emph{shortcut} play $p \defeq v_1, \dots, v_i, v_{j+1}, \dots, v_n$ has the same denotation. 
    In fact, if the maximum priority in $\indPlay{\sigma_{\eve}}{\sigma_{\adam}}{i}$ is $m$, 
    then the maximum priority in $p$ is also $m$; otherwise, it contradicts to the fact that $\sigma_{\adam}$ is optimal. 
    If the maximum priority in $\indPlay{\sigma_{\eve}}{\sigma_{\adam}}{i}$ is strictly greater than $m$, then 
    the denotation of $p$ does not change because removing $v_{i+1},\dots, v_{j}$ does not matter. 
    By iterating this shortcut construction, we can finally obtain a play $p' = v'_1, \dots, v'_l$ such that $v'_i = v'_j$ implies $i = j$ and the equality $\semPlay{p'} = \semPlay{\indPlay{\sigma_{\eve}}{\sigma_{\adam}}{i}}$ holds in finite steps. 
    Thus, we can construct a positional $\adam$-strategy $\tau_{\adam}$ such that $\semPlay{\indPlay{\sigma_{\eve}}{\sigma_{\adam}}{i}} =  \semPlay{\indPlay{\sigma_{\eve}}{\tau_{\adam}}{i}}$. 
\end{proof}

\begin{remark}
    The assumption that the $\eve$-strategy $\sigma_{\eve}$ in~\cref{lem:weakHalfPosDet} is positional is indispensable.  
    Consider the oPG shown in~\cref{fig:running_example}, and an $\eve$-strategy  such that the Player $\eve$ repeatedly moves from $c$ to $d$ until she reaches $c$ twice, and then moves to $\exrarg{1}$. 
    Then there is an optimal $\adam$-strategy that moves first from $b$ to $a$, and later from $b$ to $\exlarg{1}$, which is not positional. 
\end{remark}

\subsection{The Loop Construction and the Algorithm}
\label{subsec:loopConst}

In this section, we introduce our loop construction and we prove the positional determinacy of Pareto fronts.  
The parameter of the loop construction is a pair of an entrance and a \emph{query} for oPGs. 
Informally, we show that positional strategies are sufficient to achieve any Pareto-optimal result on the Pareto front, 
and with the loop construction they can be specified by ``minimal'' queries, the number of which is finite.
In addition, we present a simple but efficient algorithm based on the loop construction, which shows improvements over our previous algorithm~\cite{DBLP:journals/corr/abs-2307-08034}.
\begin{definition}[query]
Let $\opg{A}$ be an oPG. A \emph{query}\footnote{We call $q$ query because $q$ asks whether there is an $\eve$-strategy that leads to a better result than the dual of $q$ (see~\cref{lem:correctness_loop_construction}). } $q$ is a function $q\colon \ex^{\opg{A}}\rightarrow \setPriorities\uplus \{\bot\}$. We write $\potentialFuncs{\opg{A}}$ for the set of queries on $\opg{A}$. 
\end{definition}

\begin{figure}[t]
    \centering
    \begin{tikzpicture}[
innodeEve/.style={draw, circle, minimum size=0.5cm},
innodeAdam/.style={draw, diamond, minimum size=0.5cm},
interface/.style={draw, rectangle, minimum size=0.4cm},]
    \fill[lime] (-4cm, -1cm)--(-4cm, 1.2cm)--(-1.5cm, 1.2cm)--(-1.5cm, -1cm)--cycle;
    \node[inner sep=0] (A0enr1d) at (-2.7cm, 0cm) {\scalebox{0.7}{$\cdots$}};
    \node at (-1.7cm, 1cm) {\scalebox{0.7}{$\opg{A}$}};
    \node[inner sep=0] (A0enr1d) at (-4.3cm, 0.5cm) {};
    \node[interface,fill=white, right = 0.5cm of A0enr1d] (A0enr1c) {\scalebox{0.5}{$i_1$}};
    \node[inner sep=0] (A0enr2d) at (-4.3cm, -0.5cm) {};
    \node[interface,fill=white, right = 0.5cm of A0enr2d] (A0enr2c) {\scalebox{0.5}{$i_2$}};
    \node[inner sep=0, right = 0.4cm of A0enr1c] (A0enr1out) {};
    \node[inner sep=0, right = 0.4cm of A0enr2c] (A0enr2out) {};
    \node[inner sep=0, right = 0.8cm of A0enr1c] (A0exr1in) {};
    \node[interface,fill=white, right=0.4cm of A0exr1in] (A0exr1d) {\scalebox{0.5}{$o_1$}};
    \node[inner sep=0, right = 0.4cm of A0exr1d] (A0exr1out) {};
    \node[inner sep=0, right = 0.8cm of A0enr2c] (A0exr2in) {};
    \node[interface,fill=white, right=0.4cm of A0exr2in] (A0exr2d) {\scalebox{0.5}{$o_2$}};
    \node[inner sep=0, right = 0.4cm of A0exr2d] (A0exr2out) {};
    \draw[->, thick] (A0enr1d) -> (A0enr1c);
    \draw[->, thick] (A0enr1c) -> (A0enr1out);
    \draw[->, thick] (A0enr2d) -> (A0enr2c);
    \draw[->, thick] (A0enr2c) -> (A0enr2out);
    \draw[->, thick] (A0enr1c) -> (A0enr1out);
    \draw[->, thick] (A0exr1in) -> (A0exr1d);
    \draw[->, thick] (A0exr2in) -> (A0exr2d);
    \draw[->, thick] (A0exr1d) -> (A0exr1out);
    \draw[->, thick] (A0exr2d) -> (A0exr2out);
    \fill[orange] (0cm, -1cm)--(0cm, 1.2cm)--(2.5cm, 1.2cm)--(2.5cm, -1cm)--cycle;
    \node[inner sep=0] (Aenr1d) at (1.3cm, 0cm) {\scalebox{0.7}{$\cdots$}};
    \node at (1.9cm, 1cm) {\scalebox{0.7}{$\edgeConstructor{\opg{A}}{i_1}{q_1}$}};
    \node[inner sep=0] (Aenr1d) at (-0.3cm, 0.5cm) {};
    \node[interface,fill=white, right = 0.5cm of Aenr1d] (Aenr1c) {\scalebox{0.5}{$i_1$}};
    \node[inner sep=0] (Aenr2d) at (-0.3cm, -0.5cm) {};
    \node[interface,fill=white, right = 0.5cm of Aenr2d] (Aenr2c) {\scalebox{0.5}{$i_2$}};
    \node[inner sep=0, right = 0.4cm of Aenr1c] (Aenr1out) {};
    \node[inner sep=0, right = 0.4cm of Aenr2c] (Aenr2out) {};
    \node[inner sep=0, right = 0.8cm of Aenr1c] (Aexr1in) {};
    \node[interface,fill=white, right=0.4cm of Aexr1in] (Aexr1d) {\scalebox{0.5}{$o_1$}};
    % \node[inner sep=0, right = 0.4cm of Aexr1d] (Aexr1out) {};
    \node[inner sep=0, right = 0.8cm of Aenr2c] (Aexr2in) {};
    \node[interface,fill=white, right=0.4cm of Aexr2in] (Aexr2d) {\scalebox{0.5}{$o_2$}};
    % \node[inner sep=0, right = 0.4cm of Aexr2d] (Aexr2out) {};
    \draw[->, thick] (Aenr1d) -> (Aenr1c);
    \draw[->, thick] (Aenr1c) -> (Aenr1out);
    \draw[->, thick] (Aenr2d) -> (Aenr2c);
    \draw[->, thick] (Aenr2c) -> (Aenr2out);
    \draw[->, thick] (Aenr1c) -> (Aenr1out);
    \draw[->, thick] (Aexr1in) -> (Aexr1d);
    \draw[->, thick] (Aexr2in) -> (Aexr2d);
    \draw[->, thick, color=white] (Aexr1d) to [out=150,in=30] node[yshift=0.1cm,right=-0.5cm] {\scalebox{0.8}{$3$}} (Aenr1c);
    \draw[->, thick, color=white] (Aexr2d) to [loop above] node[yshift=-0.1cm,right=-0.45cm] {\scalebox{0.8}{$1$}} (Aexr2d);
    \fill[cyan] (4cm, -1cm)--(4cm, 1.2cm)--(6.5cm, 1.2cm)--(6.5cm, -1cm)--cycle;
    \node[inner sep=0] (A2enr1d) at (5.3cm, 0cm) {\scalebox{0.7}{$\cdots$}};
    \node at (5.9cm, 1cm) {\scalebox{0.7}{$\edgeConstructor{\opg{A}}{i_1}{q_2}$}};
    \node[inner sep=0] (A2enr1d) at (3.7cm, 0.5cm) {};
    \node[interface,fill=white, right = 0.5cm of A2enr1d] (A2enr1c) {\scalebox{0.5}{$i_1$}};
    \node[inner sep=0] (A2enr2d) at (3.7cm, -0.5cm) {};
    \node[interface,fill=white, right = 0.5cm of A2enr2d] (A2enr2c) {\scalebox{0.5}{$i_2$}};
    \node[inner sep=0, right = 0.4cm of A2enr1c] (A2enr1out) {};
    \node[inner sep=0, right = 0.4cm of A2enr2c] (A2enr2out) {};
    \node[inner sep=0, right = 0.8cm of A2enr1c] (A2exr1in) {};
    \node[interface,fill=white, right=0.4cm of A2exr1in] (A2exr1d) {\scalebox{0.5}{$o_1$}};
    % \node[inner sep=0, right = 0.4cm of Aexr1d] (Aexr1out) {};
    \node[inner sep=0, right = 0.8cm of A2enr2c] (A2exr2in) {};
    \node[interface,fill=white, right=0.4cm of A2exr2in] (A2exr2d) {\scalebox{0.5}{$o_2$}};
    \draw[->, thick] (A2enr1d) -> (A2enr1c);
    \draw[->, thick] (A2enr1c) -> (A2enr1out);
    \draw[->, thick] (A2enr2d) -> (A2enr2c);
    \draw[->, thick] (A2enr2c) -> (A2enr2out);
    \draw[->, thick] (A2enr1c) -> (A2enr1out);
    \draw[->, thick] (A2exr1in) -> (A2exr1d);
    \draw[->, thick] (A2exr2in) -> (A2exr2d);
    \draw[->, thick, color=white] (A2exr1d) to [out=150,in=30] node[yshift=0.1cm,right=-0.5cm] {\scalebox{0.8}{$0$}} (A2enr1c);
    \draw[->, thick, color=white] (A2exr2d) to [out=160,in=290] node[yshift=0.23cm,right=-0.8cm] {\scalebox{0.8}{$2$}} (A2enr1c);
    \end{tikzpicture}
    \caption{The loop constructions $\edgeConstructor{\opg{A}}{i_1}{q_1}$ and $\edgeConstructor{\opg{A}}{i_1}{q_2}$ that add white edges, where queries $q_1, q_2$ are given by (i) $q_1(o_1) \defeq 3$ and $q_1(o_2) \defeq \bot$, and (ii) $q_2(o_1) \defeq 0$ and $q_2(o_2) \defeq 2$,}
    \label{fig:loop_construction}
\end{figure} 

Given an oPG $\opg{A}$, a pair $(i, q)$ of an entrance $i$ and a query $q$ induces a PG $\edgeConstructor{\opg{A}}{i}{q}$ by the loop construction.

\begin{definition}[loop construction]
\label{def:loop_construction}
Let $\opg{A}\defeq (\pg{G}^{\opg{A}}, \interface^{\opg{A}})$ be an oPG, $i$ be an entrance, and $q$ be a query. The \emph{loop construction} $\edgeConstructor{\opg{A}}{i}{q}$ builds a PG $(V^{\opg{A}}, V^{\opg{A}}_{\eve}, V^{\opg{A}}_{\adam}, E, \Omega)$, where $E$ is the least relation and $\Omega$ is the function such that 
\begin{align*}
    &\big((v, v')\in E^{\opg{A}} \text{ and } v\not\in\ex^{\opg{A}}\big) \Longrightarrow \big((v, v')\in E,\text{ and }\ \Omega(v, v') \defeq \Omega^{\opg{A}}(v, v')\big),\\
    & \big(o\in \ex^{\opg{A}} \text{ and } q(o) \in \setPriorities \big)\Longrightarrow \big((o, i)\in E,\text{ and }\ \Omega(o, i) \defeq q(o)\big),\text{ and }\\
    & \big(o\in \ex^{\opg{A}} \text{ and } q(o) = \bot\big) \Longrightarrow\big( (o, o)\in E,\text{ and }\ \Omega(o, o) \defeq 1\big).
    % & \big(v\in \ex^{\opg{A}} \text{ and } f(v) = \top\big) \Longrightarrow\big( (v, v)\in E,\text{ and }\ \Omega(v, v) \defeq 0\big).
\end{align*}
\end{definition}
We present two examples in~\cref{fig:loop_construction}: 
for each exit $o$, the loop construction $\edgeConstructor{\opg{A}}{i}{q}$ adds 
(i) an edge to the specified entrance $i$ whose assigned priority is $q(o)$, or (ii) makes a self loop with the priority $1$ 
if $q(o) = \bot$.

The loop construction is closely related with Pareto fronts of oPGs via \emph{dual} of queries.
Firstly, we extend the sub-priority order $\priorityOrder$ to the functorial order of queries. 
We write $\enrichPriorityOrder$ for the total order on $\setPriorities\uplus \{\top,\bot\}$ such that $m \enrichPriorityOrder m'$ iff (i) $m$ is the least element $\bot$, (ii) $m\priorityOrder m'$, or (iii) $m'$ is the greatest element $\top$. 

\begin{definition}[query order]
    Let $\opg{A}$ be an oPG. The \emph{query order} $\potentialOrder{\opg{A}}$ on $\potentialFuncs{\opg{A}}$ is the functorial order w.r.t. the (restricted) total order $\enrichPriorityOrder$ on $\setPriorities\uplus \{\bot\}$, 
    that is, the inequality $q \potentialOrder{\opg{A}} q'$ holds if $q(o) \enrichPriorityOrder q'(o)$ for each $o\in \ex^{\opg{A}}$. 
\end{definition}

Given a query $q$ and assuming $q(o) = m\in \setPriorities$, the  \emph{dual} $\dualPriority{m}$ of $m$,  
which is also called the left-residual of $0$ in~\cite{DBLP:conf/csl/TsukadaO14}, reveals the relationship between queries and Pareto fronts. 
We define the dual $\dualPriority{m}$ as the lowest priority such that priorities that occur infinitely often are only $m$ and $\dualPriority{m}$, and it satisfies the parity condition (see~\cref{lem:dual_and_maxEven}).   
\begin{definition}[dual]
\label{def:least_passed_priority}
Let $m\in \setPriorities$. The \emph{dual}
$\dualPriority{m}\in \setPriorities$ is given by (i) $\dualPriority{m}\defeq m+1$ if $m$ is odd, (ii) $\dualPriority{m}\defeq 0$ if $m=0$, and (iii) $\dualPriority{m}\defeq m-1$ if $m$ is even and $m > 0$. 
\end{definition}

\begin{lemma}[\!\!\cite{DBLP:conf/csl/TsukadaO14}]
    \label{lem:dual_and_maxEven}
    Let $m, m'\in \setPriorities$. The max priority $\max(m, m')$ is even iff $\dualPriority{m} \priorityOrder m'$.  
\end{lemma}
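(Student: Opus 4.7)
The plan is a direct case analysis on $m$ following the three clauses of \cref{def:least_passed_priority}, preceded by a short preliminary that unpacks the sub-priority order $\priorityOrder$ into three easily-used facts: every odd priority sits $\priorityOrder$-below every even one; on even priorities $\priorityOrder$ agrees with the usual order on $\nat$; and on odd priorities $\priorityOrder$ reverses that order. All three are immediate from \cref{def:sub_priority_order}.

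With this in hand, each case of $\dualPriority{m}$ reduces to a short check. If $m$ is odd, then $\dualPriority{m}=m+1$ is even, so $\dualPriority{m}\priorityOrder m'$ unfolds to ``$m'$ is even and $m'\geq m+1$'', which is exactly the condition for $\max(m,m')=m'$ to be even. If $m=0$, then $\dualPriority{m}=0$, $\max(0,m')=m'$, and both sides simply say ``$m'$ is even''. If $m$ is even and positive, then $\dualPriority{m}=m-1$ is odd, and $\dualPriority{m}\priorityOrder m'$ holds exactly when either $m'$ is even or $m'$ is odd with $m'\leq m-1$; negating, the failure condition ``$m'$ odd and $m'\geq m+1$'' coincides with $\max(m,m')$ being odd.

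I expect no real obstacle: each case collapses to a single inequality once the definition of $\dualPriority{m}$ is substituted. The only subtle point is that $\priorityOrder$ inverts the natural order on the odd priorities, which is easy to get backwards if the preliminary facts are not stated explicitly. Handling the three cases in parallel keeps the verification compact and makes the role of the dual transparent.
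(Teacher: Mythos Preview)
Your case analysis is correct and complete; each of the three clauses reduces to the inequality you state once the preliminary facts about $\priorityOrder$ are in place. The paper itself does not give a proof of this lemma at all---it merely cites \cite{DBLP:conf/csl/TsukadaO14}---so your argument is a genuine addition rather than a reproduction. The only cosmetic point worth tightening is to note explicitly that $\dualPriority{m}\in\setPriorities$ in each case (in particular $m+1\le M$ when $m$ is odd, which follows since $M$ is even), but this is immediate from the standing assumption on $M$.
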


We define the \emph{dual} of queries based on the dual of priorities. 
\begin{definition}[dual of query]
    Let $\opg{A}$ be an oPG, $i$ be an entrance, and $q$ be a query. 
    The \emph{dual} $\queryAnswer{q}$ of the query $q$ is the result $\queryAnswer{q}\in \setPoints{\opg{A}} $ that is defined as follows: 
    \begin{itemize}
        \item if $q(o) = \bot$ for any $o\in \ex^{\opg{A}}$, then $\queryAnswer{q}\defeq \{\top\}$, and 
        \item otherwise, $\queryAnswer{q}\defeq \{(o, \dualPriority{q(o)})\mid o\in \ex^{\opg{A}}, q(o)\in \setPriorities\}$.
    \end{itemize}
    Note that for any result $\result \in \setPoints{\opg{A}}$ unless $\result = \{\bot\}$, 
    there is a query $q$ such that $\queryAnswer{q} = \result$.
 \end{definition}

\begin{lemma}[duality]
    \label{lem:dualQuery}
    Let $\opg{A}$ be an oPG, $i$ be an entrance, and $q, q'$ be queries. 
    The equality $q = q'$ holds 
    iff the equality $\queryAnswer{q}=\queryAnswer{q'}$ holds. 
    Moreover, the inequality $q \potentialOrder{\opg{A}} q'$ holds 
    iff the inequality $\queryAnswer{q}\succPointOrder{\opg{A}}\queryAnswer{q'}$ holds. 
\end{lemma}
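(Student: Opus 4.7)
The plan is to reduce both claims to one fundamental observation: the dual operation $(-)^d$ of \cref{def:least_passed_priority} is a strictly antitone involution on $(\setPriorities, \priorityOrder)$. I would first verify this by direct computation -- $(-)^d$ fixes $0$ and swaps $2k-1$ with $2k$ for $k \geq 1$, so traversing the sub-priority chain $M-1 \priorityStOrder \cdots \priorityStOrder 1 \priorityStOrder 0 \priorityStOrder 2 \priorityStOrder \cdots \priorityStOrder M$ and applying $(-)^d$ produces the reverse sequence $M, M-2, \ldots, 2, 0, 1, \ldots, M-3, M-1$. Combined with two bookkeeping facts about $r$ -- namely, $\queryAnswer{q} = \{\top\}$ iff $q \equiv \bot$, and $\bot$ never appears as an element of any $\queryAnswer{q}$ -- this will drive every case of the argument.

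For the first claim (injectivity of $r$), I would split on whether $\queryAnswer{q} = \{\top\}$. In that case $q \equiv \bot$ by the displayed case-split in the definition of the dual, and $\queryAnswer{q'} = \{\top\}$ likewise forces $q' \equiv \bot$, so $q = q'$. Otherwise both duals fall into the second clause; matching first coordinates exit-by-exit shows $\{o : q(o) \in \setPriorities\} = \{o : q'(o) \in \setPriorities\}$, and matching second coordinates together with injectivity of $(-)^d$ gives $q(o) = q'(o)$ on that common support. The converse direction is immediate.

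For the second claim, I would argue pointwise at each exit. For $(\Rightarrow)$, assume $q \potentialOrder{\opg{A}} q'$. The degenerate cases are easy: if $q' \equiv \bot$, then $q \equiv \bot$ by minimality of $\bot$ in $\enrichPriorityOrder$, and both duals coincide; if only $q \equiv \bot$, then $\queryAnswer{q} = \{\top\}$ and the inequality holds because $\{\top\}$ is the top of $\resultOrder{\opg{A}}$. Otherwise, for each $d = (o, q(o)^d) \in \queryAnswer{q}$, the hypothesis $q(o) \in \setPriorities$ together with $q(o) \enrichPriorityOrder q'(o)$ forces $q'(o) \in \setPriorities$ with $q(o) \priorityOrder q'(o)$; antitonicity yields $q'(o)^d \priorityOrder q(o)^d$, and so $d' = (o, q'(o)^d) \in \queryAnswer{q'}$ satisfies $d' \domainOrder d$, establishing $\queryAnswer{q'} \upperOrder \queryAnswer{q}$. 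The direction $(\Leftarrow)$ is its mirror image: for each $o$ with $q(o) \in \setPriorities$, the upper-preorder hypothesis produces a witness $d' \in \queryAnswer{q'}$ dominated by $(o, q(o)^d)$; since the domain-order relation excludes $d' = \bot$ and $d' = \top$ in this situation, the witness must have the form $(o, q'(o)^d)$, and the antitone involution converts $q'(o)^d \priorityOrder q(o)^d$ back to $q(o) \priorityOrder q'(o)$, hence $q(o) \enrichPriorityOrder q'(o)$. When $q(o) = \bot$ there is nothing to check.

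The main obstacle will be the bookkeeping in the degenerate cases, where the definition of $r$ is piecewise and $\top$ appears in $\queryAnswer{q}$ only under the extremal condition $q \equiv \bot$; one must track when an exit is absent from $\queryAnswer{q}$ versus appearing in it with some priority. Once those cases are cleanly separated, the substantive content of the proof reduces to the strict antitonicity of the involution $(-)^d$ on $(\setPriorities, \priorityOrder)$.
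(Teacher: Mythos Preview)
Your proposal is correct and follows essentially the same approach as the paper: both argue exit-by-exit, reducing the order comparison to the antitonicity of $(\,\cdot\,)^d$ on $(\setPriorities,\priorityOrder)$. The paper's proof is much terser---it dismisses the first claim with ``Clearly'' and the converse of the second with ``in the same way''---whereas you make the antitone-involution property explicit and carefully dispatch the degenerate cases $q\equiv\bot$ and $q'\equiv\bot$; these additions are sound but do not constitute a different route.
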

\begin{proof}
    Clearly, the equality $q = q'$ holds iff the equality $\queryAnswer{q}= \queryAnswer{q'}$ holds. 
    
    Suppose that $q  \potentialOrder{\opg{A}} q'$ holds. Then, either $q(o) = \bot$ or $q(o) \priorityOrder q'(o)$ hold for each $o\in O$. 
    For any $(o, m)\in \queryAnswer{q}$, the equality $m = \dualPriority{q(o)} $ holds by definition, and $ \dualPriority{q'(o)} \priorityOrder m $. 
    We thus conclude $\queryAnswer{q}\succPointOrder{\opg{A}}\queryAnswer{q'}$. The converse direction holds in the same way. 
    
\end{proof}
The lemma presented below is crucial for proving the correctness of the loop construction (\cref{prop:minimalQueryPareto}). This, in turn, directly contributes to the proof of the positional determinacy of Pareto fronts.
\begin{lemma}
\label{lem:correctness_loop_construction}
    Let $\opg{A}\defeq (\pg{G}^{\opg{A}}, \interface^{\opg{A}})$ be an oPG, $i$ be an entrance, and $q$ be a query.
    Then, the following conditions are equivalent: 
    \begin{enumerate}
        \item the node $i$ on the PG $\edgeConstructor{\opg{A}}{i}{q}$ is winning, 
        \item there is a result $\result$ such that $\queryAnswer{q} \resultOrder{\opg{A}} \result$ and $\result\in \paretoCurve{\opg{A}}{i}$. 
    \end{enumerate}
\end{lemma}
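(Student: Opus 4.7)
\begin{proofs}
The plan is to prove the two directions separately, with \cref{lem:dual_and_maxEven} serving as the bridge between evenness of max priorities along cycles of $\edgeConstructor{\opg{A}}{i}{q}$ and the sub-priority domination required for $\queryAnswer{q} \resultOrder{\opg{A}} \result$, and with \cref{lem:weakHalfPosDet} enabling a lasso argument in the converse direction.

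For $(2) \Rightarrow (1)$, I would fix a Pareto-optimal $\result$ with $\queryAnswer{q} \resultOrder{\opg{A}} \result$ and an $\eve$-strategy $\sigma_{\eve}$ witnessing $\result$ at $i$ in $\opg{A}$. The hypothesis rules out $\bot \in \result$ (otherwise $\result = \{\bot\}$ by minimality, but no element of $\queryAnswer{q}$ dominates $\bot$ from below) and, for each $(o, m) \in \result$, forces $q(o) \in \setPriorities$ with $\dualPriority{q(o)} \priorityOrder m$. I would lift $\sigma_{\eve}$ to an $\eve$-strategy $\tau_{\eve}$ on $\edgeConstructor{\opg{A}}{i}{q}$ that restarts $\sigma_{\eve}$ every time the play traverses a loop edge $o \to i$, and argue that $\tau_{\eve}$ wins from $i$ against any $\adam$-strategy. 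Slicing the induced play $p$ into segments $s_k$ between consecutive loop-edge traversals, each completed segment ending at exit $o_k$ corresponds to an $\opg{A}$-play from $i$ under $\sigma_{\eve}$ whose denotation $(o_k, m_k)$ (with $m_k$ the standard-max priority in $s_k$) lies in the raw set against $\sigma_{\eve}$; Pareto-optimality then yields $(o_k, m'_k) \in \result$ with $m'_k \priorityOrder m_k$, and the hypothesis chains this to $\dualPriority{q(o_k)} \priorityOrder m'_k \priorityOrder m_k$, so by \cref{lem:dual_and_maxEven} every $\max(m_k, q(o_k))$ is even. If $p$ completes only finitely many segments, the same argument excludes reaching any exit with $q(o) = \bot$, so the tail stays in $\opg{A}$ forever with $\opg{A}$-denotation $\top$ (since $\bot$ would again collapse $\result$ to $\{\bot\}$), which is winning in $\edgeConstructor{\opg{A}}{i}{q}$ too. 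If $p$ completes infinitely many segments, I would show that the standard-max priority $r^\ast$ occurring infinitely often in $p$ is even by contradiction: if $r^\ast$ is odd, pigeonhole on the finite priority set applied to the infinitely many segments contributing $r^\ast$ (either internally, with $m_k \geq r^\ast$, or as return priorities, with $q(o_k) = r^\ast$), combined with each $\max(m_k, q(o_k))$ being even, produces an even priority $\geq r^\ast$ recurring infinitely often, contradicting the choice of $r^\ast$.

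For $(1) \Rightarrow (2)$, I would pick a positional winning $\eve$-strategy $\sigma_{\eve}^\ast$ on $\edgeConstructor{\opg{A}}{i}{q}$, also viewed as a positional $\eve$-strategy on $\opg{A}$, and define $R^\ast \defeq \worst{\{\semPlay{\indPlay{\sigma_{\eve}^\ast}{\sigma_{\adam}}{i}} \mid \sigma_{\adam} \text{ a $\adam$-strategy on } \opg{A}\}}$, which by \cref{lem:weakHalfPosDet} is already realised by positional $\adam$-strategies. I would then show $\queryAnswer{q} \resultOrder{\opg{A}} R^\ast$ directly: $\bot \notin R^\ast$, since any losing $\opg{A}$-play under $\sigma_{\eve}^\ast$ would also be losing in $\edgeConstructor{\opg{A}}{i}{q}$ (same priorities, no exit visited); and for each $(o, m) \in R^\ast$ realised by a positional $\adam$-strategy, the two positional strategies produce a simple lasso in $\edgeConstructor{\opg{A}}{i}{q}$ given by the path from $i$ to $o$ iterated via the loop edge $o \to i$ (or the self-loop $(o,o)$ when $q(o) = \bot$). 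Since $\sigma_{\eve}^\ast$ wins, the self-loop case is impossible (its dominating priority is $1$), and the lasso's dominating priority $\max(m, q(o))$ must be even, so \cref{lem:dual_and_maxEven} yields $\dualPriority{q(o)} \priorityOrder m$, i.e., $(o, \dualPriority{q(o)}) \in \queryAnswer{q}$ dominates $(o, m)$ from below. Finiteness of the set of antichains on $\domain{\opg{A}}$ then lets me pick any Pareto-optimal $\result$ with $R^\ast \upperOrder \result$, and transitivity of $\upperOrder$ gives $\queryAnswer{q} \resultOrder{\opg{A}} \result$.

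The main obstacle is the infinitely-many-segments case of $(2) \Rightarrow (1)$, where the segment-wise evenness of $\max(m_k, q(o_k))$ must be amplified into evenness of the dominating priority of the whole play; this amplification relies on a careful pigeonhole over the finite priority set, coupled with the sub-priority characterisation of \cref{lem:dual_and_maxEven}.
\end{proofs}
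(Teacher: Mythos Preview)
Your proposal is correct and follows essentially the same approach as the paper: for $(1)\Rightarrow(2)$ both take a positional winning $\eve$-strategy on $\edgeConstructor{\opg{A}}{i}{q}$, view it on $\opg{A}$, and use \cref{lem:weakHalfPosDet} together with \cref{lem:dual_and_maxEven} on the induced lasso to get $\queryAnswer{q}\resultOrder{\opg{A}} R^\ast$; for $(2)\Rightarrow(1)$ both lift an $\eve$-strategy realising $\result$ to $\edgeConstructor{\opg{A}}{i}{q}$ by restarting after each loop edge. Your sketch is in fact more explicit than the paper's, which dispatches the $(2)\Rightarrow(1)$ direction as ``easy by a similar argument'' and leaves the upgrade from $R^\ast$ to a Pareto-optimal $\result$ implicit, whereas you spell out the segment-wise pigeonhole argument and the final antichain step.
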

\begin{proof}
Let $q(o)=\bot$ for any $o\in \ex^{\opg{A}}$. Since $\queryAnswer{q}\defeq \{\top\}$ is the greatest element, it suffices to prove that the following conditions are equivalent: 
\begin{enumerate}
    \item \label{item:winning_cond1}the node $i$ on the PG $\edgeConstructor{\opg{A}}{i}{q}$ is winning,  
    \item \label{item:winning_cond2}$\paretoCurve{\opg{A}}{i} = \big\{\{\top \}\big\}$. 
\end{enumerate}
Suppose~\cref{item:winning_cond1} holds,
then there is an optimal $\eve$-strategy such that every play that is induced by a $\adam$-strategy can not reach any exit $o\in \ex^{\opg{A}}$ because of the loop construction:
every exit has a self-loop with the priority $1$.
Therefore, the optimal $\eve$-strategy is also optimal for $\opg{A}$ from $i$ as well, and~\cref{item:winning_cond2} holds.
The converse direction holds by the same argument.

Let $q(o)\not = \bot$ for some $o\in \ex^{\opg{A}}$. 
Suppose that the node $i$ on the $\edgeConstructor{\opg{A}}{i}{q}$ is winning.
Because of the positional determinacy of PGs, there is a winning positional $\eve$-strategy $\sigma^{\eve}$ on $\edgeConstructor{\opg{A}}{i}{q}$.
By~\cref{lem:weakHalfPosDet}, we only consider positional $\adam$-strategies. 
For any positional $\adam$-strategy $\tau^{\adam}$ on $\opg{A}$, the play $\indPlay{\sigma_{\eve}}{\sigma_{\adam}}{i}$ that is induced by $\sigma_{\eve}$ and $\sigma_{\adam}$ satisfies the following condition:
if $\indPlay{\sigma_{\eve}}{\sigma_{\adam}}{i}$ does not reach any exits $o\in \ex^{\opg{A}}$, then $\indPlay{\sigma_{\eve}}{\sigma_{\adam}}{i}$ is winning.
This is because $\sigma_{\eve}$ is  winning on $\edgeConstructor{\opg{A}}{i}{q}$.
If $\indPlay{\sigma_{\eve}}{\sigma_{\adam}}{i}$ reaches an exit $o\in \ex^{\opg{A}}$, then the maximal priority $m$ that occurs in $\indPlay{\sigma_{\eve}}{\sigma_{\adam}}{i}$ until reaching $o$ satisfies that $\max\big(q(o), m\big)$ is even;
here $q(o)\in \setPriorities$ because $\sigma_{\eve}$ is winning on $\edgeConstructor{\opg{A}}{i}{q}$. 
This is equivalent to $\dualPriority{q(o)} \priorityOrder m$ by~\cref{lem:dual_and_maxEven}. 
Thus, we conclude that $\queryAnswer{q} \resultOrder{\opg{A}}\worst{\big\{ \semPlay{\indPlay{\sigma_{\eve}}{\sigma_{\adam}}{i}}  \  \big | \ \sigma_{\adam}\text{ is a $\adam$-strategy}\big\}}$,
which proves~\cref{item:winning_cond2}.

Conversely, suppose that there is a result $\result$ such that $\queryAnswer{q} \resultOrder{\opg{A}} \result$ and $\result\in \paretoCurve{\opg{A}}{i}$. 
Take an $\eve$-strategy $\sigma_{\eve}$ such that $\result = \worst{\big\{ \semPlay{\indPlay{\sigma_{\eve}}{\sigma_{\adam}}{i}}  \  \big | \ \sigma_{\adam}\text{ is a $\adam$-strategy}\big\}}$;
 note that  $\sigma_{\eve}$ may not be positional. 
Then, the $\eve$-strategy $\sigma_{\eve}$ natually induces an $\eve$-strategy $\sigma'_{\eve}$ on $\edgeConstructor{\opg{A}}{i}{q}$ that mimicks $\sigma_{\eve}$.
It is easy to show that the $\eve$-strategy $\sigma'_{\eve}$ is indeed winning by a similar argument.

% (use~\cref{lem:dual_and_maxEven}).  

\end{proof}

The following proposition is a direct consequence of~\cref{lem:dualQuery,lem:correctness_loop_construction}.
\begin{proposition}
    \label{prop:minimalQueryPareto}
    Let $\opg{A}$ be an oPG, $i$ be an entrance, and $q$ be a query.
    The following conditions are equivalent: 
    \begin{enumerate}
        \item the query $q$ is minimal such that the entrance $i$ on the PG $\edgeConstructor{\opg{A}}{i}{q}$ is winning,
        \item the dual $\queryAnswer{q}$ is Pareto-optimal, that is,  $\queryAnswer{q}\in \paretoCurve{\opg{A}}{i}$. 
    \end{enumerate}
\end{proposition}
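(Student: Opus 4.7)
The plan is to derive both directions as direct corollaries of \cref{lem:correctness_loop_construction} combined with the order-reversing duality of \cref{lem:dualQuery}, which translates strict comparisons between queries into strict comparisons between their duals in $\setPoints{\opg{A}}$.

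For the forward direction (1)$\Rightarrow$(2), I would first apply \cref{lem:correctness_loop_construction} to the winning query $q$ to obtain some $\result \in \paretoCurve{\opg{A}}{i}$ with $\queryAnswer{q} \resultOrder{\opg{A}} \result$. The goal is to show $\queryAnswer{q} = \result$. Note that $\queryAnswer{q}$ is never $\{\bot\}$ by definition of the dual, so $\queryAnswer{q} \resultOrder{\opg{A}} \result$ forces $\result \neq \{\bot\}$ as well; hence the note after the definition of dual guarantees a query $q'$ with $\queryAnswer{q'} = \result$. If $\queryAnswer{q} \neq \result$, then \cref{lem:dualQuery} gives $q' \potentialOrder{\opg{A}} q$ strictly, and applying \cref{lem:correctness_loop_construction} in the other direction (with $\result' \defeq \queryAnswer{q'} \in \paretoCurve{\opg{A}}{i}$) shows that $i$ is already winning on $\edgeConstructor{\opg{A}}{i}{q'}$, contradicting the minimality of $q$. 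Therefore $\queryAnswer{q} = \result \in \paretoCurve{\opg{A}}{i}$.

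For the reverse direction (2)$\Rightarrow$(1), assuming $\queryAnswer{q} \in \paretoCurve{\opg{A}}{i}$, \cref{lem:correctness_loop_construction} with $\result \defeq \queryAnswer{q}$ immediately yields that $i$ is winning on $\edgeConstructor{\opg{A}}{i}{q}$. For minimality, suppose some strictly smaller $q' \potentialOrder{\opg{A}} q$ were still winning; then \cref{lem:correctness_loop_construction} produces $\result' \in \paretoCurve{\opg{A}}{i}$ with $\queryAnswer{q'} \resultOrder{\opg{A}} \result'$, while \cref{lem:dualQuery} yields $\queryAnswer{q} \resultOrder{\opg{A}} \queryAnswer{q'}$ strictly. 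Chaining these gives $\queryAnswer{q} \resultOrder{\opg{A}} \result'$ strictly with both $\queryAnswer{q}$ and $\result'$ lying in $\paretoCurve{\opg{A}}{i}$, contradicting the fact that $\paretoCurve{\opg{A}}{i}$ is an antichain in $\resultOrder{\opg{A}}$ (which is built into the $\best{\cdot}$ operator together with \cref{prop:antisymmetry}).

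I do not expect a genuine obstacle here: the proposition is essentially a repackaging of the two preceding lemmas, and the only subtle point is to rule out $\result = \{\bot\}$ in the forward direction so that a witnessing query $q'$ exists, and to appeal to the antichain property of Pareto fronts in the reverse direction.
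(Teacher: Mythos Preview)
Your proposal is correct and matches the paper's approach: the paper simply states that the proposition is a direct consequence of \cref{lem:dualQuery} and \cref{lem:correctness_loop_construction}, and your argument is precisely the natural unpacking of that claim. The only difference is that you spell out the details (ruling out $\{\bot\}$, invoking the antichain property of $\paretoCurve{\opg{A}}{i}$) that the paper leaves implicit.
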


Finally, we prove the positional determinacy of Pareto fronts.  
\begin{proof}[Proof of~\cref{thm:positionalDeterminacy}]
    Suppose that the entrance $i$ of $\opg{A}$ is not losing, that is, $\paretoCurve{\opg{A}}{i} \not = \big\{\{\bot\}\big\}$ .
    By~\cref{prop:minimalQueryPareto} (and its proof),
    every Pareto-optimal result $\result$ induces the corresponding minimal query $q$ such that the PG $\edgeConstructor{\opg{A}}{i}{q}$ is winning. 
    Since the PG $\edgeConstructor{\opg{A}}{i}{q}$ is positionally determined, there is a positional winning $\eve$-strategy $\tau_{\eve}$ on $\edgeConstructor{\opg{A}}{i}{q}$; 
    we also regard $\tau_{\eve}$ as a positional $\eve$-strategy on  $\opg{A}$. 
    By~\cref{lem:weakHalfPosDet}, we only consider positional optimal $\adam$-strategies $\tau_{\adam}$ w.r.t. $\tau_{\eve}$.
    If the play $\indPlay{\tau_{\eve}}{\tau_{\adam}}{i}$ does not reach an exit, then the play $\indPlay{\tau_{\eve}}{\tau_{\adam}}{i}$ is winning. 
    If the play $\indPlay{\tau_{\eve}}{\tau_{\adam}}{i}$ reach an exit $o$, then the equality $\dualPriority{q(o)} = m$ hold, 
    where $m$ is the maximum priority that occurs in the play $\indPlay{\tau_{\eve}}{\tau_{\adam}}{i}$; here we use the mimality of $q$ and $\tau_{\adam}$.
    Thus, we can conclude that  $\worst{\big\{ \semPlay{\indPlay{\tau_{\eve}}{\sigma_{\tau}}{i}}  \  \big | \ \tau_{\adam}\text{ is a positional $\adam$-strategy}\big\}} = \queryAnswer{q}$.
    
    The case that the entrance $i$ of $\opg{A}$ is losing is easy to show. This is because of the positional determinacy of parity games.  
\end{proof}

\myparagraph{Algorithm and Time Complexity}
\label{subsec:alg_time_complexity}
The positional determinacy of oPGs that is shown in~\cref{thm:positionalDeterminacy} implies that solving oPGs is decidable by enumerating positional strategies. 
Furthermore,~\cref{prop:minimalQueryPareto} suggests that there is an efficient algorithm based on the loop construction when the number of exists are small. 
Specifically, we evaluate whether the PG $\edgeConstructor{\opg{A}}{i}{q}$ is winning for each query $q$, 
and we keep only those minimal queries that are wininng, whose duals are Pareto-optimal. 
Notably, we can use well-studied algorithms of PGs to solve the PG $\edgeConstructor{\opg{A}}{i}{q}$ without the need to enumerate positional strategies.
\cref{alg:paretoCurveOPGs} shows the details of our algorithm.

\begin{algorithm}[t]
    \caption{\texttt{solveParetoFront}: Solving Pareto fronts for oPGs}
    \begin{algorithmic}[1]
    \State \textbf{Input:} an oPG $\opg{A}$ and an entrance $i$
    \State \textbf{Output:} the Pareto front $\paretoCurve{\opg{A}}{i}$
    \State initialize $S\defeq \big\{\{\bot\}\big\}$  
    \For{each query $q$}
        \State solve the PG $\edgeConstructor{\opg{A}}{i}{q}$
        \If{$\edgeConstructor{\opg{A}}{i}{q}$ is winning}
            \State add $\queryAnswer{q}$ in $S$
        \EndIf
    \EndFor
    \State $S\leftarrow \mathtt{removeSuboptimal}(S)$
    \State \textbf{return} $S$.
    \end{algorithmic}
    \label{alg:paretoCurveOPGs}
\end{algorithm}

\begin{proposition}  
\label{prop:complexityOPG}
 \cref{alg:paretoCurveOPGs} requires time $\mathcal{O}\big((M+2)^N\cdot L + N\cdot (M+2)^{2N}\big)$ to 
 compute the Pareto front $\paretoCurve{\opg{A}}{i}$ of an oPG $\opg{A}$ and an entrance $i$, where $N$ is the number of exits, $M$ is the largest priority,
 and $L$ is the required time for a given algorithm to solve PGs that are built by the loop construction. 
\end{proposition}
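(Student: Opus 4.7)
The plan is to bound the running time of Algorithm~\ref{alg:paretoCurveOPGs} by analysing its two phases separately: the loop over queries, and the final call to \texttt{removeSuboptimal}. Correctness of the algorithm follows directly from \cref{prop:minimalQueryPareto}: every Pareto-optimal result is the dual of some (minimal) query whose loop construction is winning, and conversely every such dual appears in $S$ after the loop. Hence the complexity analysis only needs to count the work done.

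First I would count queries. A query is a function $q\colon \ex^{\opg{A}}\to \setPriorities \uplus \{\bot\}$, and since $|\setPriorities| = M+1$, the number of queries is exactly $(M+2)^N$. For each query $q$ the algorithm performs the loop construction $\edgeConstructor{\opg{A}}{i}{q}$ (which can be done in linear time in the size of $\opg{A}$, absorbed into $L$) and then calls the given PG solver, costing $L$ time per query. Summing over all queries yields $(M+2)^N \cdot L$ work for the whole loop; this also bounds $|S|$ by $(M+2)^N$ after insertion.

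Next I would bound \texttt{removeSuboptimal}. The set $S$ has at most $(M+2)^N$ entries, each of which is a result, i.e.\ an antichain on $\domain{\opg{A}}$ of cardinality at most $N$ (at most one pair $(o,m)$ per exit, by the antichain property). Deciding $\result \resultOrder{\opg{A}} \result'$ reduces to checking, for each of the $O(N)$ elements of $\result'$, whether some element of $\result$ dominates it; hence a single comparison costs $O(N)$ time. A naive sweep that for each $\result \in S$ discards $\result$ whenever some other $\result' \in S$ satisfies $\result \prec_{\mathbf{R}_{\opg{A}}} \result'$ performs $O(|S|^2)$ comparisons, giving $O\bigl(N \cdot (M+2)^{2N}\bigr)$ time. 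Adding the two contributions yields the claimed bound $\mathcal{O}\bigl((M+2)^N \cdot L + N \cdot (M+2)^{2N}\bigr)$.

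The only subtle point, and the place to be careful, is the cost model for \texttt{removeSuboptimal}: one must confirm that each result stored in $S$ is of size $O(N)$ (which follows because the antichain condition and the fact that $\domainOrder$ is total on each fibre $\{(o, m) \mid m \in \setPriorities\}$ force at most one entry per exit, plus possibly $\top$ or $\bot$), and that the comparison $\resultOrder{\opg{A}}$ is computable in $O(N)$ time. Both are immediate from \cref{def:results_order} and the definition of $\domainOrder$. No further work is needed, and the bound follows.
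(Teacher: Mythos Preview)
Your proposal is correct and follows essentially the same approach as the paper: count the $(M+2)^N$ queries to get the $(M+2)^N\cdot L$ term, then bound \texttt{removeSuboptimal} by $O(|S|^2)$ pairwise comparisons at $O(N)$ each. The paper's own proof is just a terse two-line version of exactly this argument, so there is nothing substantively different to compare. One small slip: in an antichain $\top$ or $\bot$ cannot coexist with any $(o,m)$, so ``plus possibly $\top$ or $\bot$'' is inaccurate, though it does not affect the $O(N)$ size bound; and to make the $O(N)$-per-comparison claim airtight you should note that for $(o',m')\in\result'$ the only candidate dominator in $\result$ is the (unique) entry at the same exit $o'$, so a lookup rather than a scan suffices.
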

\begin{proof}
    Since the number of queries $Q$ is $(M+2)^N$, we solve PGs  $Q$ times, 
    which requires time $\mathcal{O}\big(Q\cdot L\big)$. 
    Removing sub-optimal results can be done in $\mathcal{O}\big(N\cdot Q^2)$ by comparing all results: 
    for comparing two results, we see exits at most $N$ times. 
\end{proof}
Note that~\cref{alg:paretoCurveOPGs} requires an exponential memory to the number of exists since the size of Pareto fronts itself can be exponential. 
A variant of our previous algorithm~\cite{DBLP:journals/corr/abs-2307-08034}, designed for open parity games,
involves enumerating all positional strategies, the number of which is $\mathcal{O}\big((M_{E})^{|V|} \big)$, where 
$M_E$ represents the maximum number of the outgoing edges, and $V$ is the set of nodes. 
In contrast, \cref{alg:paretoCurveOPGs} is exponential only in the number of exits, benefiting from the quasi-polymial-time algorithms for parity games (e.g.~\cite{DBLP:journals/siamcomp/CaludeJKLS22,DBLP:conf/lics/JurdzinskiL17,DBLP:journals/sttt/FearnleyJKSSW19,DBLP:conf/lics/Lehtinen18}).

\section{Solving String Diagrams of Parity Games}
\label{sec:solvingSD}

We apply~\cref{alg:paretoCurveOPGs} that solves oPGs for compositionally solving string diagrams of PGs.  

For compositionally solving a given PG $\semantics{\sd{D}}$, which is defined in~\cref{def:sd_opgs},
 we translate Pareto fronts of oPGs into \emph{shortcut} oPGs. 
 A similar shortcut construction is introduced for MDPs~\cite{WVHRJ2024accepted}, where there is the unique Player. 
\begin{definition}[shortcut oPG]
    \label{def:shortcutoPG}
    Let $\opg{A}\defeq (\pg{G}^{\opg{A}}, \interface^{\opg{A}})$ be an oPG. The \emph{shortcut oPG} $\shortcut{\opg{A}}$ of $\opg{A}$ is the oPG $(\pg{G}, \interface^{\opg{A}})$, where the parity game $\pg{G}\defeq (V, V_{\eve}, V_{\adam}, E, \Omega)$ is defined by 
\begin{itemize}
    \item $V\defeq \interface^{\opg{A}} \uplus \coprod_{i\in \en^{\opg{A}}} \paretoCurve{\opg{A}}{i}$, $V_{\eve} \defeq \interface^{\opg{A}}$, and $V_{\adam} \defeq \coprod_{i\in \en^{\opg{A}}}  \paretoCurve{\opg{A}}{i}$, and
    \item we simultaneously define $E$ and $\Omega$: for any $\result\in  \paretoCurve{\opg{A}}{i}$, $\big(i, \result\big)\in E$ and $\Omega\big(i, \result\big) \defeq 0$, and for any $d\in \result$, 
    \begin{itemize}
        \item if $d = \top$, then $(\result, \result)\in E$ and $\Omega(\result, \result) \defeq 0$, 
        \item if $d = (o, m)$, then $(\result, o\big)\in E$ and $\Omega(\result, o\big) \defeq m$, and
        \item if $d = \bot$, then $(\result, \result)\in E$ and $\Omega(\result, \result) \defeq 1$.
    \end{itemize}
\end{itemize}
\end{definition}

% \begin{wrapfigure}[13]{r}[0pt]{5.2cm}
\begin{figure}
    \centering
    % \vspace{-1em}
    \scalebox{1}{
    \begin{tikzpicture}[
innodeEve/.style={draw, circle, minimum size=0.5cm},
innodeAdam/.style={draw, diamond, minimum size=0.5cm},
interface/.style={draw, rectangle, minimum size=0.5cm},]
        \node[interface,fill=white, yshift=1cm] (s1) {\scalebox{0.8}{$\enrarg{1}$}};
        \node[interface,fill=white, yshift=-1cm] (s9) {\scalebox{0.8}{$\exlarg{1}$}};
        \node[inner sep=0,right=-1cm of s9] (exl1) {};
        \node[inner sep=0,right=-1cm of s1] (enr1) {};
        \node[innodeAdam,right=1cm of s1, yshift=0.5cm] (s2) {\scalebox{0.8}{$\result_2$}};
        \node[innodeAdam,right=1cm of s1, yshift=-0.6cm] (s3) {\scalebox{0.8}{$\result_3$}};
        \node[innodeAdam,right=1cm of s1, yshift=-1.7cm] (s4) {\scalebox{0.8}{$\result_1$}};
        % \node[innodeAdam,right=0.75cm of s9] (s3) {\scalebox{0.8}{$$}};
        % \node[innodeAdam,right=0.75cm of s2] (s4) {\scalebox{0.8}{$c$}};
        % \node[innodeEve,right=0.75cm of s3] (s5) {\scalebox{0.8}{$d$}};
        % \node[innodeAdam,right=0.75cm of s4] (s6) {\scalebox{0.8}{$e$}};
        \node[interface,fill=white,right=1cm of s2, yshift=-0.5cm] (s7) {\scalebox{0.8}{$\exrarg{1}$}};
        \node[inner sep=0,right=0.3cm of s7] (exr1) {};
        \draw[->] (enr1) -> (s1);
        \draw[->] (s1) -> node [above] {$\scalebox{0.8}{0}$} (s2);
        \draw[->] (s1) -> node [above] {$\scalebox{0.8}{0}$} (s3);
        \draw[->] (s1) -> node [above] {$\scalebox{0.8}{0}$} (s4);
        \draw[->] (s9) -> (exl1);
        \draw[->] (s2) -> node [above] {$\scalebox{0.8}{1}$} (s7);
        \draw[->] (s3) -> node [above] {$\scalebox{0.8}{2}$} (s7);
        \draw[->] (s3) -> node [below] {$\scalebox{0.8}{2}$} (s9);
        \draw[->] (s4) -> node [below] {$\scalebox{0.8}{1}$} (s9);
        % \draw[->] (s1) -> node [above] {\scalebox{0.8}{$0$}} (s2);
        % \draw[->] (s2) -> node [above] {\scalebox{0.8}{$1$}} (s9);
        % \draw[->] (s3) -> node [above] {\scalebox{0.8}{$1$}} (s9);
        % \draw[->] (s2) -> node [above] {\scalebox{0.8}{$1$}} (s4);
        % \draw[->] (s4) -> node [left] {\scalebox{0.8}{$0$}} (s5);
        % \draw[->] (s5) -> node [above] {\scalebox{0.8}{$2$}} (s3);
        % \draw[->] (s3) -> node [left] {\scalebox{0.8}{$1$}} (s2);
        % \draw[->] (s4) -> node [above] {\scalebox{0.8}{$2$}} (s6);
        % \draw[->] (s6) -> node [above] {\scalebox{0.8}{$1$}} (s7);
        % \draw[->] (s5) -> node [above] {\scalebox{0.8}{$0$}} (s6);
        \draw[->] (s7) -> (exr1);
        % \draw[->] (s0) -> node [above] {$\scalebox{0.7}{b}$} (bs0);
        % \draw[->] (as0) -> node [above] {$\scalebox{0.7}{1}$} (s1);
        % \draw[->] (s1) -> node [above] {$\scalebox{0.7}{a}$} (as1);
        % \draw[->] (s1) -> node [above] {$\scalebox{0.7}{b}$} (bs1);
        % \draw[->] (bs0) -> node [left, xshift=-0.5cm] {$\scalebox{0.7}{0.27}$} (s3);
        % \draw[->] (bs0) -> node [below] {$\scalebox{0.7}{0.3}$} (s4);
        % \draw[->] (as1) -> node [above] {$\scalebox{0.7}{0.2}$} (s3); 
        % % \draw[->] (as1) -> node [left] {$\scalebox{0.7}{.4}$} (s2); 
        % \draw[->] (as1) -> node [left, yshift=0.2cm, xshift=-0.2cm] {$\scalebox{0.7}{0.4}$} (s4); 
        % \draw[->] (bs1) -> node [below] {$\scalebox{0.7}{0.3}$} (s3);
        % % \draw[->] (bs1) -> node [right] {$\scalebox{0.7}{.6}$} (s2);
        % \draw[->] (bs1) -> node [below,xshift=-0.1cm, yshift=0.05cm] {$\scalebox{0.7}{0.1}$} (s4);
        % \draw[->] (s3) -> (exr1);
        % \draw[->] (s4) -> (exr2);
    \end{tikzpicture}
    }
    \caption{The shortcut oPG $\shortcut{\opg{A}}$ of an oPG $\opg{A}$
     whose Pareto front $\paretoCurve{\opg{A}}{\enrarg{1}}$ is $\Big\{ \big\{ (\exlarg{1}, 1)\big\},\, \big\{  (\exrarg{1}, 1) \big\},\,
     \big\{ (\exrarg{1}, 2), (\exlarg{1}, 2) \big\}
     \Big\}$. We write $\result_1$, $\result_2$, and $\result_3$ for $\{(\exlarg{1}, 1)\}$, $\{  (\exrarg{1}, 1) \}$, and $\{ (\exrarg{1}, 2), (\exlarg{1}, 2) \}$, respectively. 
     Note that the entrance $\enrarg{1}$ is owned by Player $\eve$.  }
    \label{fig:shortcutOPG}
\end{figure}
% \end{wrapfigure} 
We present an example to illustrate this shortcut construction in~\cref{fig:shortcutOPG}. 
Given an oPG $\opg{A}$ and its Pareto fronts $\paretoCurves{\opg{A}}$, 
the shortcut oPG $\shortcut{\opg{A}}$ is a summary of $\opg{A}$, that is, 
it eliminates internal nodes in $\opg{A}$, and keeps only optimal results that are stored in the Pareto fronts $\paretoCurves{\opg{A}}$. 
This shortcut construction allows us to reuse the Pareto fronts for different appearances in a given string diagram with a reduced number of nodes.

Formally, the following  two propositions ensure the correctness of this shortcut construction, that is, the compositionality of Pareto fronts with the shortcut construction. 

\begin{proposition}
\label{prop:base_Pareto_curve}
    Let $\opg{A}$ be an oPG. The Pareto front $\paretoCurve{\opg{A}}{i}$ of $\opg{A}$ coincides with the Pareto front $\paretoCurve{\shortcut{\opg{A}}}{i}$ of the shortcut oPG $\shortcut{\opg{A}}$. 
\end{proposition}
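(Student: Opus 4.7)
The plan is to exploit the very restricted game-theoretic structure of $\shortcut{\opg{A}}$ in order to reduce the identity $\paretoCurve{\shortcut{\opg{A}}}{i}=\paretoCurve{\opg{A}}{i}$ to a purely combinatorial manipulation of the $\worst{\cdot}$ and $\best{\cdot}$ operators.

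\textbf{Step 1: characterise plays from $i$ in $\shortcut{\opg{A}}$.} Starting at the entrance $i$ (owned by Player $\eve$), Eve's only available action is to pick some successor, which by \cref{def:shortcutoPG} is an Adam-node $\result\in\paretoCurve{\opg{A}}{i}$. From that node, Adam picks a successor determined by an element $d\in\result$: if $d=(o,m)$ the play moves to the sink exit $o$ and collects maximum priority $m$; if $d=\top$ the play is trapped in the self-loop of priority $0$; if $d=\bot$ in the self-loop of priority $1$. Since each $\result\in\setPoints{\opg{A}}$ is an antichain, the three possibilities are mutually exclusive in a well-defined way, and a direct computation shows that in every case the denotation of the resulting play equals $d$ itself. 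Consequently a pair of (possibly history-dependent) strategies from $i$ is, for the purposes of play denotations, completely specified by a pair $(\result, d)$ with $\result\in\paretoCurve{\opg{A}}{i}$ and $d\in\result$.

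\textbf{Step 2: apply $\worst{\cdot}$, then $\best{\cdot}$.} Fixing an Eve strategy that picks $\result$ at $i$ and varying Adam's strategy produces, by Step 1, the set of denotations $\result$ itself; since $\result$ is a result (\cref{def:results_order}), it is already an antichain under $\domainOrder$, so $\worst{\result}=\result$. Varying Eve's choice over all $\result\in\paretoCurve{\opg{A}}{i}$ then yields the family $\paretoCurve{\opg{A}}{i}\subseteq\setPoints{\opg{A}}$, and one applies $\best{\cdot}$ to it. But $\paretoCurve{\opg{A}}{i}$ was originally defined as a set of maximal elements under $\upperOrder$, so it is an antichain under $\resultOrder{\opg{A}}$ (which by \cref{prop:antisymmetry} is a genuine partial order), and therefore $\best{\paretoCurve{\opg{A}}{i}}=\paretoCurve{\opg{A}}{i}$. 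Combining the two steps gives $\paretoCurve{\shortcut{\opg{A}}}{i}=\paretoCurve{\opg{A}}{i}$.

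\textbf{Main obstacle.} The only real subtlety is justifying the reduction of arbitrary, possibly non-positional, strategies in $\shortcut{\opg{A}}$ to the combinatorial pairs $(\result,d)$ described in Step 1. The key observation that makes this painless is that along every play from $i$ each player has at most one non-trivial decision point: Eve chooses $\result$ at $i$ and is never again at an Eve-node until reaching a sink exit, and Adam chooses $d$ at $\result$ and either immediately exits or is forced into a self-loop. Thus history-dependence is vacuous, and the degenerate Pareto fronts $\{\{\top\}\}$ and $\{\{\bot\}\}$ are handled uniformly by the two self-loops of priority $0$ and $1$ built into the shortcut construction, which are precisely designed to realise the denotations $\top$ and $\bot$.
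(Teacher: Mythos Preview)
Your argument is correct and is precisely the elaboration the paper has in mind: the paper merely states that \cref{prop:base_Pareto_curve} ``is easy to prove by the definition of the shortcut construction,'' and your Steps~1 and~2 unpack exactly that definition, showing that from $i$ the shortcut game degenerates to a single $\eve$-choice of $\result\in\paretoCurve{\opg{A}}{i}$ followed by a single $\adam$-choice of $d\in\result$, whence $\paretoCurve{\shortcut{\opg{A}}}{i}=\best{\{\worst{\result}\mid\result\in\paretoCurve{\opg{A}}{i}\}}=\best{\paretoCurve{\opg{A}}{i}}=\paretoCurve{\opg{A}}{i}$.
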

\cref{prop:base_Pareto_curve} is easy to prove by the definition of the shortcut construction. 

\begin{proposition}
\label{prop:comp_Pareto_curve}
    Let $\ast\in \{\seqcomp, \oplus\}$, $\opg{A}\ast \opg{B}$ be an oPG, and $i$ be an entrance.
    The Pareto front $\paretoCurve{\opg{A}\ast \opg{B}}{i}$ of $\opg{A}\ast \opg{B}$ coincides with the Pareto front $\paretoCurve{\shortcut{\opg{A}}\ast \shortcut{\opg{B}}}{i}$ of $\shortcut{\opg{A}}\ast\shortcut{\opg{B}}$. 
\end{proposition}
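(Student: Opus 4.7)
My plan is to establish the equality of Pareto fronts by a double-inclusion argument, leveraging positional determinacy (Theorem \ref{thm:positionalDeterminacy}) together with a simulation correspondence between strategies on $\opg{A}\ast\opg{B}$ and $\shortcut{\opg{A}}\ast\shortcut{\opg{B}}$.

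For the sum case ($\ast = \oplus$) the argument is direct: in $\opg{A}\oplus\opg{B}$ there are no edges bridging the two components, so any play from an entrance $i \in \en^{\opg{A}}$ stays entirely in $\opg{A}$, yielding $\paretoCurve{\opg{A}\oplus\opg{B}}{i} = \paretoCurve{\opg{A}}{i}$ (and symmetrically for $\opg{B}$). The same holds after replacing each component by its shortcut, and the claim reduces to Proposition~\ref{prop:base_Pareto_curve} applied componentwise.

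The sequential composition case ($\ast = \seqcomp$) is the substantive one, and I plan two correspondences. First, a \emph{simulation from shortcut to original}: any positional strategy pair on $\shortcut{\opg{A}}\seqcomp\shortcut{\opg{B}}$ can be mimicked by a (history-dependent) strategy pair on $\opg{A}\seqcomp\opg{B}$ yielding the same denotation. Whenever Player $\eve$ on the shortcut at entrance $i'$ of $\shortcut{\opg{A}}$ selects a Pareto-optimal $\result' \in \paretoCurve{\opg{A}}{i'}$, positional determinacy on $\opg{A}$ alone supplies a positional $\eve$-strategy realizing $\result'$ from $i'$, which I splice in for the current visit; Player $\adam$ is handled symmetrically by picking a $\adam$-strategy forcing the shortcut's chosen element of $\result'$. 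Second, a \emph{lifting from original to shortcut with weakening}: given a positional strategy pair on $\opg{A}\seqcomp\opg{B}$, the restriction to $\opg{A}$ induces at each visited entrance $i'$ some result $\result_{i'} \in \semEntrance{i'}$; choosing a dominating Pareto-optimal $\result^*_{i'} \in \paretoCurve{\opg{A}}{i'}$ with $\result^*_{i'} \upperOrder \result_{i'}$ and instructing Player $\eve$ in $\shortcut{\opg{A}}$ to select $\result^*_{i'}$ produces a shortcut strategy whose composed denotation dominates the original one. The monotonicity of $\max$ in the sub-priority order (Lemma~\ref{lem:monoPriority}) is what propagates this weakening along plays whose overall priority is the maximum of segment priorities.

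The main obstacle is that the lifting step does not preserve denotations exactly but only dominates them, so I need Pareto-optimality on both sides to close the gap. For $\paretoCurve{\opg{A}\seqcomp\opg{B}}{i} \subseteq \paretoCurve{\shortcut{\opg{A}}\seqcomp\shortcut{\opg{B}}}{i}$, a Pareto-optimal $\result$ in the original lifts to an achievable $\result^*$ in the shortcut with $\result^* \succeq \result$; but by the reverse simulation $\result^*$ is also achievable in the original, and any strict domination would contradict Pareto-optimality of $\result$ there, forcing $\result^* = \result$. Conversely, if $\result$ is Pareto-optimal in the shortcut, the simulation exhibits it as achievable in the original, and any strictly better $\result' \succ \result$ achievable in the original would, by lifting, yield a strictly better achievable result in the shortcut, contradicting Pareto-optimality there. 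The delicate point throughout is orchestrating the two directions so that strict comparisons transfer faithfully across the two sides; verifying this requires a careful case analysis of plays that alternate between visits to $\opg{A}$ and $\opg{B}$ through the matched open ends.
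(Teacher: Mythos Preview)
Your plan is essentially the paper's approach: your ``spliced'' history-dependent strategies are exactly what the paper calls \emph{entrance-dependent} strategies, and your weakening-via-$\max$ argument is the content of the paper's separate monotonicity lemmas for $\adam$ and for $\eve$. The paper closes with an antisymmetry argument on a lower preorder of maximal result sets rather than a direct double inclusion, but that difference is cosmetic.

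There is one point where your sketch understates the work. Your simulation from shortcut to original pairs up strategy \emph{pairs} and claims ``same denotation'', but for the Pareto-front argument you actually need that the spliced $\eve$-strategy $\tilde\sigma_{\eve}$ on $\opg{A}\seqcomp\opg{B}$ has worst-case result \emph{equal to} (not merely containing) the shortcut result. Pairing strategies only shows that every shortcut outcome is reproducible in the original; it does not show that a fully history-dependent $\adam$ on the original cannot force something strictly below every shortcut outcome against $\tilde\sigma_{\eve}$. The paper isolates precisely this as a separate lemma: against an entrance-dependent $\eve$-strategy, entrance-dependent $\adam$-strategies already suffice to realise the minimum (the analogue of \cref{lem:weakHalfPosDet} one level up). Without this ingredient, your step ``by the reverse simulation $\result^*$ is also achievable in the original'' is not justified---you only know each element of $\result^*$ lies in the $\adam$-range for $\tilde\sigma_{\eve}$, not that the worst case is $\result^*$---and the intended contradiction with Pareto-optimality of $\result$ does not fire. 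This is the one substantive lemma your plan does not name.
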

\begin{proofs}
For $\ast = \oplus$,~\cref{prop:comp_Pareto_curve} is easy to prove
 because there is no interaction between $\opg{A}$ and $\opg{B}$. 
 For  $\ast = \seqcomp$, we focus on ``entrance-dependent'' strategies, and prove the monotonicity of entrance-dependent strategies. 
 Furthermore, we can see that entrance-dependent $\forall$-strategies suffice for entrance-dependent $\exists$-strategies, like~\cref{lem:weakHalfPosDet}. 
 Using these facts and the positional determinacy (\cref{thm:positionalDeterminacy}), we can prove the compositionality of the Pareto front. 
 We present formal definitions and proofs in~\cref{subsec:proofAlgSD}. 
\end{proofs}

 \begin{algorithm}[t]
    \caption{\texttt{solveStringDiagrams}: Solving String Diagrams of PGs}
    \begin{algorithmic}[1]
    \State \textbf{Input:} a string diagram $\sd{D}$
    \State \textbf{Output:} the Pareto fronts $\paretoCurves{\semantics{\sd{D}}}$
    \If{$\sd{D} = \opg{A}$} 
    \State \Return $\mathtt{solveParetoFronts}(\opg{A})$ 
    \Comment run~\cref{alg:paretoCurveOPGs} for each entrance
    \ElsIf{$\sd{D} = \sd{D}_1\ast \sd{D}_2$ for $\ast\in \{\seqcomp, \oplus\}$}
        \State $\mathtt{solveStringDiagrams}(\sd{D}_1)$ and  $\mathtt{solveStringDiagrams}(\sd{D}_2)$
        \Comment{recursion}
        \State construct shortcut oPGs $\shortcut{\semantics{\sd{D}_1}}$ and $\shortcut{\semantics{\sd{D}_2}}$
        \Comment{\cref{def:shortcutoPG}}
        \State \Return $\mathtt{solveParetoFronts}\big(\shortcut{\semantics{\sd{D}_1}}\ast \shortcut{\semantics{\sd{D}_2}}\big)$ 
    \EndIf
    \end{algorithmic}
    \label{alg:sdPGs}
\end{algorithm}

 By~\cref{prop:base_Pareto_curve,prop:comp_Pareto_curve}, we can compositionally solve string diagrams for PGs: we show a pseudocode in~\cref{alg:sdPGs}. 
 For a given string diagram $\sd{D}$, we inductively solve $\sd{D}$: 
 for the base case, that is, $\sd{D}=\opg{A}$, we directly apply~\cref{alg:paretoCurveOPGs} for $\opg{A}$.
 For the step case, that is, $\sd{D}=\sd{D}_1\ast \sd{D}_2$ for $\ast\in \{\seqcomp, \oplus\}$,
 we recursively  solve $\sd{D}_1$ and $\sd{D}_2$, and construct their shortcut oPGs. 
 By~\cref{prop:comp_Pareto_curve}, solving  $\paretoCurves{\shortcut{\sd{D}_1}\ast \shortcut{\sd{D}_2}}$ leads to the Pareto front of $\semantics{\sd{D}}$.

 We remark that for a string diagram $\sd{D}$ such that $\semantics{\sd{D}}$ has no exists,~\cref{alg:sdPGs} runs  in exponential-time, whereas monolithic algorithms---obtained by disregarding the compositional structure---can run in quasi-polynomial time (e.g.~\cite{DBLP:journals/siamcomp/CaludeJKLS22}).
Nevertheless,~\cref{alg:sdPGs} can reuse the solutions of individual components across multiple occurrences, which is particularly beneficial when the system contains duplicate subsystems. 
For instance,~\cite{WVHRJ2024accepted,WatanabeVJH24} exploit such duplication to verify hierarchical systems, where reusability is key to achieving high performance.

\section{Conclusion}
We introduce Pareto fronts of oPGs as a multi-objective solution. 
Positional determinacy holds for our multi-objective solution of oPGs, 
which leads to our novel algorithm for solving open parity games with the loop construction. 
We propose the shortcut construction for open parity games, and we provide a compositional algorithm for string diagrams of parity games by composing Pareto fronts.

A future work is to support the mean-payoff objective, where we have to consider the sum of weights along plays.
Although we expect that positional determinacy also holds for a similar semantics for open mean-payoff (or energy) games, 
the number of objectives may become huge because we believe that we have to consider the sum of weights along plays. 
This is challenging to achieve an efficient compositional algorithm for string diagrams of mean-payoff games. 
For this, we believe that an efficient approximation algorithm for Pareto fronts of open mean-payoff games is needed. 

It would be interesting to investigate whether strategy improvement algorithms for parity games (e.g.~\cite{VogeJ00,LuttenbergerMS20}) can be lifted to the setting of string diagrams of parity games.
 In this context, a strategy improvement algorithm for string diagrams would maintain a strategy throughout the iteration process—one that is not yet winning due to the openness of the system, but still encodes essential information for obtaining the compositional solution.
  Pursuing this direction and comparing our algorithm with such strategy improvement approaches is an exciting future direction. 

Extending the diagrammatic language for parity games~\cite{Piedeleu25} to our semantics is another interesting future direction. 
In our work we use the sub-priority order on priorities, while they use the trivial discrete order on priorities. 
This difference is essential, since positional strategies are sufficient for our semantics, while they are not sufficient for their semantics (of oPGs). 

%
% ---- Bibliography ----
%
% BibTeX users should specify bibliography style 'splncs04'.
% References will then be sorted and formatted in the correct style.
%
\bibliography{calco2025}

\clearpage

\appendix

\section{Proof of~\cref{prop:comp_Pareto_curve} for $\ast = \seqcomp$}
\label{subsec:proofAlgSD}
In this section, we additionally assume that every entrance on a given oPG is not reachable except as an initial node, that is, $(v, v')\in E$ implies $v'\not \in \en$ without loss of generality.  
We fix oPGs $\opg{A}, \opg{B}$, and an entrance $i$ on $\opg{A}\seqcomp \opg{B}$.
Here, we only consider plays that start from the specified entrance $i\in \opg{A}$; note that the sequential composition is bidirectional and the ordering between $\opg{A}$ and $\opg{B}$ is not essential.  
Firstly, we introduce \emph{entrance-dependent} strategies on $\opg{A}\seqcomp \opg{B}$: 
intuitively, for each play, they behave positionally depending on the last entrance that appears in the play.

\begin{definition}[entrance-dependent strategy]
   Let $\sigma_{\eve}$ be an $\eve$-strategy on $\opg{A}\seqcomp \opg{B}$. 
   We write $\posStrategies{\opg{A}}{\eve}$ and $\posStrategies{\opg{B}}{\eve}$ for the set of positional $\eve$-strategies on $\opg{A}$ and  $\opg{B}$, respectively. 
   The $\eve$-strategy $\sigma_{\eve}$ is \emph{entrance-dependent} if 
   there are assignments $\sigma^{\opg{A}}_{\eve, \_}\colon \en^{\opg{A}}\rightarrow \posStrategies{\opg{A}}{\eve}$
   and $\sigma^{\opg{B}}_{\eve, \_}\colon  \en^{\opg{B}}\rightarrow \posStrategies{\opg{B}}{\eve}$
   such that for any play $p = p_1\cdots p_n\in V^{\ast}\times V_{\eve}$, if the last entrance $i'$ that appears in $p$ is $i'\in \en^{\opg{A}}$,  
   then the $\eve$-strategy $\sigma_{\eve}$ is given by $\sigma_{\eve}(p) = \sigma^{\opg{A}}_{\eve, i'}(p_n)$, 
   and same for the case of $i'\in \en^{\opg{B}}$.  

   We also define \emph{entrance-dependent} $\adam$-strategies in the same manner. 
\end{definition}

Entrance-dependent strategies consist of positional strategies and \emph{monotonicity} of positional strategies holds.
Intuitively, if we use better (or worse) positional strategies for entrance-dependent strategies, then their entrance-dependent strategies are also better (or worse).

\begin{lemma}[$\adam$-monotonicity]
   \label{lem:monoAdam}
   Let $\sigma_{\eve}$ be an entrance-dependent $\eve$-strategy and $\sigma_{\adam}, \tau_{\adam}$ be entrance-dependent $\adam$-strategies. 
   Suppose that for each $i'\in  \en^{\opg{A}}$,  
   the inequality 
   $\semPlay{\indPlay{\sigma^{\opg{A}}_{\eve, i'}}{\tau^{\opg{A}}_{\adam, i'}}{i'}} \domainOrder \semPlay{\indPlay{\sigma^{\opg{A}}_{\eve, i'}}{\sigma^{\opg{A}}_{\adam, i'}}{i'}}$ holds, and same for $i'\in\en^{\opg{B}}$.
   Then, the inequality $\semPlay{\indPlay{\sigma_{\eve}}{\tau_{\adam}}{i}}\domainOrder \semPlay{\indPlay{\sigma_{\eve}}{\sigma_{\adam}}{i}}$ holds. 
\end{lemma}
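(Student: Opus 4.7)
\myparagraph{Proof plan.}
The key observation is that by entrance-dependence, each play in $\opg{A}\seqcomp\opg{B}$ starting at $i$ decomposes into \emph{segments}: each segment starts at some entrance $i'$ of $\opg{A}$ or $\opg{B}$ and, within the current component, coincides exactly with the play induced by the positional strategies assigned to $i'$ in the standalone component, until either an exit of the component is reached (triggering a $0$-priority jump to an entrance of the other component) or the play remains in that component forever. Because positional strategies determine each segment uniquely, the sequence of entrances visited is deterministic given the per-entrance strategies and, as there are finitely many entrances, is eventually periodic.

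The plan is to trace $\indPlay{\sigma_\eve}{\sigma_\adam}{i}$ and $\indPlay{\sigma_\eve}{\tau_\adam}{i}$ segment-by-segment and apply the per-entrance hypothesis at each step. The first segments correspond to standalone plays $q_0,q'_0$ in $\opg{A}$ starting at $i$ with the same positional $\eve$-strategy, so the hypothesis gives $\semPlay{q'_0}\domainOrder\semPlay{q_0}$. A short case analysis on $\semPlay{q_0}$ shows:
\begin{itemize}
\item if $\semPlay{q_0}=\top$, the composed $\sigma_\adam$-play has denotation $\top$ and the inequality is trivial;
\item if $\semPlay{q'_0}=\bot$, the composed $\tau_\adam$-play has denotation $\bot$ and the inequality is trivial;
\item otherwise $\semPlay{q_0}=(o,m)$ and $\semPlay{q'_0}=(o,m')$ with $m'\priorityOrder m$ and the \emph{same} exit $o$ (by definition of $\domainOrder$), so both composed plays transition to the same entrance of the other component and the analysis iterates.
\end{itemize}
Iterating this, either the $\tau_\adam$-play yields $\bot$ at some segment (and we are done), or both composed plays visit identical sequences of entrances and reach identical exits, with $\tau_\adam$'s segment-max priorities pointwise $\priorityOrder$ those of $\sigma_\adam$.

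It remains to combine per-segment data into the overall denotation. If the entrance sequence terminates, the overall denotation is $\top$, $\bot$, or $(o_{\mathrm{final}},\max_k m_k)$ for some true exit $o_{\mathrm{final}}$ of the composition, and the desired inequality follows by induction on the number of segments using $\cref{lem:monoPriority}$ on the pointwise priority comparison. If the sequence is infinite (hence eventually periodic by determinism and finiteness of entrances), the overall denotation is $\top$ or $\bot$ according to the parity of the max priority appearing in the cycle, and the same monotonicity argument on the cycle closes the case. The hard part will be the infinite-cycle case: one must verify that the priority governing the overall parity---the standard max over the cycle---is still monotone under $\priorityOrder$; this reduces to iterating $\cref{lem:monoPriority}$ around the cycle, combined with the structural fact that in the sub-priority order all odd priorities lie strictly below all even ones, so a cycle-max that was even for $\sigma_\adam$ cannot degrade to an odd cycle-max for $\tau_\adam$ under pointwise $\priorityOrder$-comparison.
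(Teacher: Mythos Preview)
Your approach is essentially the paper's own: decompose the composed play into per-entrance segments, compare segment-wise via the hypothesis, then combine using monotonicity of $\max$ under $\priorityOrder$ (\cref{lem:monoPriority}). The paper argues the infinite case directly on the sequence of segment-maxima rather than invoking eventual periodicity, but your periodicity observation is correct and makes the argument cleaner.

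There is, however, a directional slip in your final sentence that you must fix. From pointwise $m'_j\priorityOrder m_j$ and iterated \cref{lem:monoPriority} you correctly obtain $\max_j m'_j \priorityOrder \max_j m_j$ over the cycle. But the structural fact ``all odds lie strictly below all evens in $\priorityOrder$'' yields the \emph{opposite} of what you state: if the $\sigma_\adam$-cycle-max is \emph{odd} (i.e.\ $\semPlay{\indPlay{\sigma_\eve}{\sigma_\adam}{i}}=\bot$), then anything $\priorityOrder$-below it is also odd, hence the $\tau_\adam$-cycle-max is odd and $\semPlay{\indPlay{\sigma_\eve}{\tau_\adam}{i}}=\bot$ as required. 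Your sentence ``a cycle-max that was even for $\sigma_\adam$ cannot degrade to an odd cycle-max for $\tau_\adam$'' is false (an even cycle-max \emph{can} drop to odd under $\priorityOrder$) and is anyway the trivial case, since if $\semPlay{\indPlay{\sigma_\eve}{\sigma_\adam}{i}}=\top$ the inequality holds for free. Swap the parities and the argument closes.
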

\begin{proof}
    We prove the inequality $\semPlay{\indPlay{\sigma_{\eve}}{\tau_{\adam}}{i}}\domainOrder \semPlay{\indPlay{\sigma_{\eve}}{\sigma_{\adam}}{i}}$. 
    If the play $\indPlay{\sigma_{\eve}}{\sigma_{\adam}}{i}$ is winning, then obviously the inequality holds. 
    
    Suppose that the play $\indPlay{\sigma_{\eve}}{\sigma_{\adam}}{i}$ reaches an exit $o$ (of $\opg{A}\seqcomp \opg{B}$) with the maximum priority $m$, that is, $\semPlay{\indPlay{\sigma_{\eve}}{\sigma_{\adam}}{i}} = (o, m)$. 
    By the assumption and the monotonicity of priorities (\cref{lem:monoPriority}), the play $\indPlay{\sigma_{\eve}}{\tau_{\adam}}{i}$ is either losing, or reaching the same exit with a potentially different maximum priority $m'$ such that $m'\priorityOrder m$.
    Thus, the desired inequality also holds. 

    Suppose that the play $\indPlay{\sigma_{\eve}}{\sigma_{\adam}}{i}$ is losing.
    By collecting the denotations of $\semPlay{\indPlay{\sigma^{\opg{A}}_{\eve, i'}}{\sigma^{\opg{A}}_{\adam, i'}}{i'}}$ and $\semPlay{\indPlay{\sigma^{\opg{B}}_{\eve, i'}}{\sigma^{\opg{B}}_{\adam, i'}}{i'}}$, 
    we obtain an infinte sequence $m_1\cdot m_2\cdots $ of priorities for $\indPlay{\sigma_{\eve}}{\sigma_{\adam}}{i}$ such that 
    the maximum priorities that occurs infinitely often in this sequence is odd. 
    By the assumption, the corresponding infinite sequence $m'_1\cdot m'_2\cdots$ of priorities for $\indPlay{\sigma_{\eve}}{\tau_{\adam}}{i}$ satisfies that 
    $m'_j\priorityOrder m_j$ for each $j\in \nat$. 
    For each $j\in \nat$, if $m_j$ is the maximum priority $m$ that occurs infinitely often in the sequence  $m_1\cdot m_2\cdots $, then $m'_j$ is a bigger or the same odd number, and at least 
    one of the such odd number occurs infinitely often in the sequence $m'_1\cdot m'_2\cdots$ because the size of priorities is finite.  
    If $m_j$ is even, then $m'_j$ is a smaller or the same even number, therefore, the maximum even 
    number that occurs infinitely often in the sequence $m'_1\cdot m'_2\cdots$ is strictly smaller than $m$. 
    Therefore, the maximum number that occurs infinitely often in $m'_1\cdot m'_2\cdots$ is again odd, which means that the desired inequality holds.   
\end{proof}

\begin{lemma}[$\eve$-monotonicity]
   \label{lem:monoEve}
   Let $\sigma_{\eve},\tau_{\eve}$ be entrance-dependent $\eve$-strategies.  
   Suppose that for each $i'\in \en^{\opg{A}}$,   
   the following inequality holds: 
   \begin{align*}
       \worst{\big\{ \semPlay{\indPlay{\tau^{\opg{A}}_{\eve, i'}}{\tau_{\adam}}{i'}}  \  \big | \ \tau^{\opg{A}}_{\adam}\text{ is a $\adam$-strategy}\big\}}\resultOrder{\opg{A}} \worst{\big\{ \semPlay{\indPlay{\sigma^{\opg{A}}_{\eve, i'}}{\sigma_{\adam}}{i'}}  \  \big | \ \sigma^{\opg{A}}_{\adam}\text{ is a $\adam$-strategy}\big\}},
   \end{align*}
   and same for $i'\in \en^{\opg{B}}$ .
   Then, for any entrance-dependent $\adam$-strategy $\sigma_{\adam}$, there is an entrance-dependent $\adam$-strategy $\tau_{\adam}$ such that
        $\semPlay{\indPlay{\tau_{\eve}}{\tau_{\adam}}{i}}   \domainOrder \semPlay{\indPlay{\sigma_{\eve}}{\sigma_{\adam}}{i}}$.
\end{lemma}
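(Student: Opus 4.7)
The plan is to build the required entrance-dependent $\adam$-strategy $\tau_{\adam}$ portal-by-portal from the positional components of $\sigma_{\adam}$, and then reduce the desired global inequality to essentially the same case analysis used in the proof of \cref{lem:monoAdam}.

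First, for each entrance $i' \in \en^{\opg{A}}$ I would let $d^{\sigma}_{i'} \defeq \semPlay{\indPlay{\sigma^{\opg{A}}_{\eve, i'}}{\sigma^{\opg{A}}_{\adam, i'}}{i'}}$ be the denotation of the local sub-play in $\opg{A}$. By \cref{lem:weakHalfPosDet} applied to $\opg{A}$ with $\sigma^{\opg{A}}_{\eve, i'}$ fixed, together with the positional determinacy of parity games, I can descend $d^{\sigma}_{i'}$ to a $\domainOrder$-smaller element $\hat{d}^{\sigma}_{i'}$ lying in $\worst{\{\semPlay{\indPlay{\sigma^{\opg{A}}_{\eve, i'}}{\sigma_{\adam}}{i'}} \mid \sigma_{\adam}\}}$ that is realisable by some positional $\adam$-strategy on $\opg{A}$. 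Unfolding the assumed result-order inequality as the upper preorder $\upperOrder$ then yields some $\hat{d}^{\tau}_{i'}$ in the corresponding $\tau$-worst-set with $\hat{d}^{\tau}_{i'} \domainOrder \hat{d}^{\sigma}_{i'}$; pick a positional $\tau^{\opg{A}}_{\adam, i'}$ realising it against $\tau^{\opg{A}}_{\eve, i'}$. Do the same for entrances of $\opg{B}$ and assemble these positional pieces into an entrance-dependent $\tau_{\adam}$.

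By construction, at every entrance $i'$ the local inequality $\semPlay{\indPlay{\tau^{\opg{A}}_{\eve, i'}}{\tau^{\opg{A}}_{\adam, i'}}{i'}} \domainOrder \semPlay{\indPlay{\sigma^{\opg{A}}_{\eve, i'}}{\sigma^{\opg{A}}_{\adam, i'}}{i'}}$ holds (and analogously for $\opg{B}$), which is exactly the pointwise hypothesis of the case analysis in the proof of \cref{lem:monoAdam}. I would then replay that case analysis: if $\indPlay{\sigma_{\eve}}{\sigma_{\adam}}{i}$ is winning the global inequality is trivial; if it reaches an exit of $\opg{A}\seqcomp \opg{B}$ after finitely many portal traversals, then either some $\tau$-sub-play is losing in its own component, in which case the global $\tau$-play has denotation $\bot$, or $\tau$ follows the same portal sequence as $\sigma$ and ends at the same exit with a $\priorityOrder$-smaller maximum priority (using \cref{lem:monoPriority}); if $\indPlay{\sigma_{\eve}}{\sigma_{\adam}}{i}$ is losing, the argument from \cref{lem:monoAdam} about infinite priority sequences transfers verbatim.

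The main obstacle is handling the possible divergence of the two play trajectories once a local $\tau$-sub-play gives a strictly smaller outcome than its $\sigma$-counterpart. The key observation that tames this is that any strict decrease in $\domainOrder$ can only go to $\bot$ or to the same exit with a smaller priority: the former makes the global $\tau$-play $\bot$ (and we are done), while the latter keeps the two plays synchronised at the portal level, so the finite- and infinite-trajectory arguments of \cref{lem:monoAdam} apply without modification.
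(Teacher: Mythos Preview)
Your proposal is correct and follows essentially the same approach as the paper: both first replace each local $\sigma^{\opg{A}}_{\adam,i'}$ by a positional optimal $\adam$-strategy (giving a $\domainOrder$-smaller local denotation), then use the assumed result-order inequality to find a $\domainOrder$-smaller local denotation on the $\tau$-side realised by a positional $\tau^{\opg{A}}_{\adam,i'}$, assemble these into the entrance-dependent $\tau_{\adam}$, and finally invoke the case analysis of \cref{lem:monoAdam}. Your final paragraph explaining why the two global plays stay synchronised at the portal level is in fact slightly more explicit than the paper, which simply says ``by the almost same argument used in the proof of \cref{lem:monoAdam}.''
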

\begin{proof}
    For any entrance-dependent $\adam$-strategy $\sigma_{\adam}$, 
    we construct an entrance-dependent $\adam$-strategy $\tau_{\adam}$ such that 
    the inequality  $\semPlay{\indPlay{\tau_{\eve}}{\tau_{\adam}}{i}}   \domainOrder \semPlay{\indPlay{\sigma_{\eve}}{\sigma_{\adam}}{i}}$ holds. 
    For each entrance $i'\in \en^{\opg{A}}$, there is a positional optimal $\adam$-strategy $\sigma'^{\opg{A}}_{\adam, i'}$ such that
    the inequality $\semPlay{\indPlay{\sigma^{\opg{A}}_{\eve, i'}}{\sigma'^{\opg{A}}_{\adam, i'}}{i'}}\domainOrder \semPlay{\indPlay{\sigma^{\opg{A}}_{\eve, i'}}{\sigma^{\opg{A}}_{\adam, i'}}{i'}}$ holds. 
    By the assumption, we can take a positional optimal $\adam$-strategy $\tau^{\opg{A}}_{\adam, i'}$ such that  $\semPlay{\indPlay{\tau^{\opg{A}}_{\eve,i'}}{\tau^{\opg{A}}_{\adam, i'}}{i'}}\domainOrder \semPlay{\indPlay{\sigma^{\opg{A}}_{\eve, i'}}{\sigma'^{\opg{A}}_{\adam, i'}}{i'}}$, 
    thus the inequality  $\semPlay{\indPlay{\tau^{\opg{A}}_{\eve, i'}}{\tau^{\opg{A}}_{\adam, i'}}{i'}}\domainOrder \semPlay{\indPlay{\sigma^{\opg{A}}_{\eve, i'}}{\sigma^{\opg{A}}_{\adam, i'}}{i'}}$ holds. 
    Similarly, we can take a positional optimal $\adam$-strategy $\tau^{\opg{B}}_{\adam, i'}$ for each entrance $i'\in  \en^{\opg{B}}$ 
    such that the inequality $\semPlay{\indPlay{\tau^{\opg{B}}_{\eve, i'}}{\tau^{\opg{B}}_{\adam, i'}}{i'}}\domainOrder \semPlay{\indPlay{\sigma^{\opg{B}}_{\eve, i'}}{\sigma^{\opg{B}}_{\adam, i'}}{i'}}$ holds. 
    By the almost same argument used in the proof of~\cref{lem:monoAdam}, 
    we can conclude that the inequality $\semPlay{\indPlay{\tau_{\eve}}{\tau_{\adam}}{i}}   \domainOrder \semPlay{\indPlay{\sigma_{\eve}}{\sigma_{\adam}}{i}}$ holds, 
    where $\tau_{\adam}$ is the entrance-dependent $\adam$-strategy that consists of these positional optimal $\adam$-strategies $\tau^{\opg{A}}_{\adam, i'}$ and $\tau^{\opg{B}}_{\adam, i'}$. 
\end{proof}

The last ingredient for proving~\cref{prop:comp_Pareto_curve} is the fact that entrance-dependent $\adam$-strategies suffice for entrance-dependent $\eve$-strategies.

\begin{lemma}
   \label{lem:weakHalfEntEepDet} 
   Let $\sigma_{\eve}$ be an entrance-dependent $\eve$-strategy. 
   Then, entrance-dependent $\adam$-strategies suffice, that is,
   the following equality holds: 
   \begin{align*}
       \worst{\big\{ \semPlay{\indPlay{\sigma_{\eve}}{\sigma_{\adam}}{i}}  \  \big | \ \text{a $\adam$-strategy }\sigma_{\adam}\big\}}  = \worst{\big\{ \semPlay{\indPlay{\sigma_{\eve}}{\tau_{\adam}}{i}}  \  \big | \ \text{an entrance-dependent $\adam$-strategy }\tau_{\adam} \big\}}.
   \end{align*}
\end{lemma}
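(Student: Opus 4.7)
The plan is to reduce the lemma to \cref{lem:weakHalfPosDet} via a product construction that turns entrance-dependent strategies into positional ones on an enlarged oPG. First, the $\supseteq$ direction of the claimed equality is essentially immediate: entrance-dependent $\adam$-strategies form a subset of all $\adam$-strategies, so any element of the right-hand set that also lies in the left-hand set and is minimal on the left remains minimal on the right. Thus the real work is to show that for every $\adam$-strategy $\sigma_{\adam}$ there is an entrance-dependent $\tau_{\adam}$ with $\semPlay{\indPlay{\sigma_{\eve}}{\tau_{\adam}}{i}} \domainOrder \semPlay{\indPlay{\sigma_{\eve}}{\sigma_{\adam}}{i}}$.

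I will build an auxiliary oPG $\widetilde{\opg{G}}$ whose nodes are pairs $(v, i')$, where $v$ is a node of $\opg{A}\seqcomp\opg{B}$ and $i'$ is the last entrance visited, which belongs to the component containing $v$. From $(v, i')$ with $v\to v'$ in $\opg{A}\seqcomp\opg{B}$, I will put an edge to $(v', i')$ when $v'$ is not an entrance and to $(v', v')$ when $v'$ is an entrance, copying priorities and ownership from the original edge; each exit $o$ becomes a family of sinks $(o, i')$ with self-loop priority $0$, so the result is a genuine oPG. Plays from $(i, i)$ in $\widetilde{\opg{G}}$ then biject with plays from $i$ in $\opg{A}\seqcomp\opg{B}$, traversing the same priorities, and denotations agree across the bijection.

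Under this correspondence, the entrance-dependent $\sigma_{\eve}$ lifts to a positional $\eve$-strategy $\tilde{\sigma}_{\eve}$ on $\widetilde{\opg{G}}$ defined, on a node $(v, i')$ with $v\in V^{\opg{X}}$ for $\opg{X}\in\{\opg{A},\opg{B}\}$, by $\tilde{\sigma}_{\eve}((v, i'))\defeq (\sigma^{\opg{X}}_{\eve, i'}(v), i')$, with the usual reset when the successor is itself an entrance; the same recipe identifies positional $\adam$-strategies on $\widetilde{\opg{G}}$ with entrance-dependent $\adam$-strategies on $\opg{A}\seqcomp\opg{B}$. Applying \cref{lem:weakHalfPosDet} to $\widetilde{\opg{G}}$ at the initial node $(i, i)$ with the positional $\tilde{\sigma}_{\eve}$ then produces, for each $\tilde{\sigma}_{\adam}$ lifted from $\sigma_{\adam}$, a positional $\tilde{\tau}_{\adam}$ with denotation $\domainOrder$ that of $\tilde{\sigma}_{\adam}$; pushing $\tilde{\tau}_{\adam}$ back through the bijection yields the entrance-dependent $\tau_{\adam}$ needed.

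The hardest part will be the book-keeping of the product construction: checking that the would-be exits $(o, i')$ satisfy the sink-with-priority-$0$ assumption of \cref{def:open_parity_game}, and that the lifting map is a genuine bijection between entrance-dependent strategies on $\opg{A}\seqcomp\opg{B}$ and positional strategies on $\widetilde{\opg{G}}$ for both players. The only subtle point in the play correspondence is the transition across the boundary between $\opg{A}$ and $\opg{B}$ along the edges introduced in \cref{def:seqOPG}, but since every such edge immediately enters an entrance of the opposite component, the ``last entrance'' coordinate updates automatically and the invariant that $i'$ and $v$ lie in the same component is maintained throughout the play.
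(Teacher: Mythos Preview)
Your approach is correct and genuinely different from the paper's. The paper argues directly by case analysis on the denotation of an optimal play: if $\indPlay{\sigma_{\eve}}{\sigma_{\adam}}{i}$ is losing, it invokes positional determinacy of PGs on each entrance-indexed slice; if the play reaches an exit, it records the finite sequence $i_1\cdots i_n$ of component entrances traversed, picks positional $\adam$-strategies per entrance, and then reruns essentially the same loop-shortcutting argument that already appears in the proof of \cref{lem:weakHalfPosDet} to collapse repeated entrances with the same index. In other words, the paper \emph{re-proves} a variant of \cref{lem:weakHalfPosDet} in situ, adapted to the entrance-indexed setting.

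Your product construction packages this more cleanly: by making the last component-entrance explicit in the state, entrance-dependent strategies on $\opg{A}\seqcomp\opg{B}$ become precisely positional strategies on $\widetilde{\opg{G}}$, and \cref{lem:weakHalfPosDet} applies as a black box. This buys you uniformity (no case split on losing versus exit-reaching, no repeated shortcut argument) at the cost of the bookkeeping you already flag---checking that the exits $(o,i')$ are sinks with priority $0$ and that the play bijection respects denotations. One point worth tightening: the domain $\domain{\widetilde{\opg{G}}}$ has exits $(o,i')$ rather than $o$, so the order $\domainOrder$ on $\widetilde{\opg{G}}$ is formally finer than on $\opg{A}\seqcomp\opg{B}$; you should note explicitly that the projection $(o,i')\mapsto o$ is monotone for $\domainOrder$, which is immediate. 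A second cosmetic point: your opening sentence about the $\supseteq$ direction is a bit garbled---the inclusion is not ``immediate'' from $S\subseteq T$ alone, but both inclusions follow at once from your key claim that every $\adam$-strategy is dominated by an entrance-dependent one, so you can simply drop that sentence and state the key claim as the full content of the lemma.
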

\begin{proof}
    Let $\sigma_{\adam}$ be an optimal $\adam$-strategy. 
    If the play $\indPlay{\sigma_{\eve}}{\sigma_{\adam}}{i}$ is losing, it is straightforward 
    to construct a losing entrance-dependent $\adam$-strategy by distinguishing last entrances, and positional determinacy of PGs.  
    
    Suppose that the play $\indPlay{\sigma_{\eve}}{\sigma_{\adam}}{i}$ reaches an exit $o$ on $\opg{A}\seqcomp \opg{B}$. 
    Let $i_1\cdots i_n$ be the (finite) sequence of entrances in $\en^{\opg{A}} \uplus \en^{\opg{B}}$ that occurs in $\indPlay{\sigma_{\eve}}{\sigma_{\adam}}{i}$, where the ordering is the order of their appearance. 
    Then, we can take a positioal $\adam$-strategy $\tau^{\opg{A}}_{\adam,j}$ for each entrances $i_j$ such that 
    the constructed $\adam$-strategy $\tau_{\adam}$ from these $\tau^{\opg{A}}_{\adam,j}$ satisfies that $\semPlay{\indPlay{\sigma_{\eve}}{\tau_{\adam}}{i}}\domainOrder\semPlay{\indPlay{\sigma_{\eve}}{\sigma_{\adam}}{i}} $. 
    Since $\sigma_{\adam}$ is optimal, the equality $\semPlay{\indPlay{\sigma_{\eve}}{\tau_{\adam}}{i}}=\semPlay{\indPlay{\sigma_{\eve}}{\sigma_{\adam}}{i}} $ holds. 
    
    Here, $\tau_{\adam}$ may not be an entrance-dependent, that is, there are $j\not =  j'$ such that $i_j = i_{j'}$ and  $\tau^{\opg{A}}_{\adam,j} \not = \tau^{\opg{A}}_{\adam,j'}$ in general. 
    If $\tau_{\adam}$ is not an entrance-dependent, we construct an $\adam$-entrance-dependent strategy by a similar argument of the proof of~\cref{lem:weakHalfPosDet}; note that we assume $\sigma_{\eve}$ is entrance-dependent. 
\end{proof}

\begin{definition}[lower preorder]
    Let $\mathbf{T}$ be a set such that $\mathbf{T}\defeq \{\best{T} \mid T\subseteq \setPoints{\opg{A}\seqcomp \opg{B}}, \text{and $T\not = \emptyset$}\}$.
    Then, the \emph{lower preorder} $\lowerOrder$ on $\mathbf{T}$ is defined as follows: $\best{T_1}\lowerOrder \best{T_2}$ if for any result $\result_1\in \best{T_1}$, there is a result $\result_2\in \best{T_2}$
    such that $\result_1\resultOrder{\opg{A}\seqcomp \opg{B}} \result_2$. 
    
\end{definition}

\begin{lemma}
    \label{lem:lowerOrder}
    The preorder $\lowerOrder$ is a partial order. 
\end{lemma}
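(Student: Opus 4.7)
The plan is to verify the three properties of a partial order for $\lowerOrder$ on $\mathbf{T}$: reflexivity, transitivity, and antisymmetry. Reflexivity and transitivity should be straightforward consequences of the corresponding properties of $\resultOrder{\opg{A}\seqcomp \opg{B}}$, while antisymmetry is the step that needs a small observation about $\best{T}$.

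For reflexivity, given $\best{T_1}\in\mathbf{T}$, every $\result_1\in \best{T_1}$ satisfies $\result_1\resultOrder{\opg{A}\seqcomp \opg{B}}\result_1$, so $\result_1$ itself witnesses the defining condition of $\best{T_1}\lowerOrder \best{T_1}$. For transitivity, assume $\best{T_1}\lowerOrder \best{T_2}$ and $\best{T_2}\lowerOrder \best{T_3}$; for any $\result_1\in \best{T_1}$ pick $\result_2\in \best{T_2}$ with $\result_1\resultOrder{}\result_2$, then pick $\result_3\in \best{T_3}$ with $\result_2\resultOrder{}\result_3$, and conclude $\result_1\resultOrder{}\result_3$ by transitivity of $\resultOrder{\opg{A}\seqcomp \opg{B}}$.

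The main step is antisymmetry. Suppose $\best{T_1}\lowerOrder \best{T_2}$ and $\best{T_2}\lowerOrder \best{T_1}$. The key observation is that $\best{T_1}$ is an antichain in $\bigl(\setPoints{\opg{A}\seqcomp \opg{B}},\resultOrder{\opg{A}\seqcomp \opg{B}}\bigr)$: if $\result,\result'\in \best{T_1}$ with $\result\resultOrder{}\result'$, then maximality of $\result$ in $T_1$ together with $\result'\in T_1$ and the antisymmetry of $\resultOrder{\opg{A}\seqcomp \opg{B}}$ (\cref{prop:antisymmetry}) forces $\result=\result'$. Now take any $\result_1\in \best{T_1}$; chase witnesses through both hypotheses to obtain $\result_2\in \best{T_2}$ and $\result'_1\in \best{T_1}$ with $\result_1\resultOrder{}\result_2\resultOrder{}\result'_1$. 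By the antichain property of $\best{T_1}$, $\result_1=\result'_1$, hence $\result_1\resultOrder{}\result_2\resultOrder{}\result_1$, and antisymmetry of $\resultOrder{\opg{A}\seqcomp \opg{B}}$ yields $\result_1=\result_2\in \best{T_2}$. Thus $\best{T_1}\subseteq \best{T_2}$, and by symmetry $\best{T_1}=\best{T_2}$.

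The only nontrivial point—and therefore the one I would highlight in the proof—is the antichain observation about $\best{T_1}$, since without it the two opposite $\lowerOrder$-inequalities would only give a cycle through a priori distinct elements. Once that observation is in place, everything reduces cleanly to \cref{prop:antisymmetry}.
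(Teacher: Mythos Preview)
Your proof is correct and follows essentially the same approach as the paper's: both chase $\result_1\resultOrder{}\result_2\resultOrder{}\result'_1$ through the two hypotheses, use maximality of $\best{T_1}$ to collapse $\result_1=\result'_1$, and then invoke \cref{prop:antisymmetry} to conclude $\result_1=\result_2$. The only difference is that you spell out reflexivity and transitivity and phrase the key step as an ``antichain'' observation, whereas the paper focuses solely on antisymmetry and appeals directly to maximality.
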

\begin{proof}
    It suffices to prove that $\best{T_1}\lowerOrder\best{T_2}$ and $\best{T_2}\lowerOrder\best{T_1}$ imply that $\best{T_1} \subseteq \best{T_2}$. 
    For any result $\result_1\in \best{T_1}$, there is a result $\result_2\in \best{T_2}$ such that $\result_1\resultOrder{\opg{A}\seqcomp \opg{B}} \result_2$. 
    Similarly, there is a $\result_3\in \best{T_1}$ such that $\result_2\resultOrder{\opg{A}\seqcomp \opg{B}}\result_3$. 
    Thus, we can conclude that $\result_1 = \result_3$ because of the maximality of $\best{T_1}$. 
    Since the result order is a partial order, we can conclude that $\result_1 = \result_2$, thus the desired $\best{T_1} \subseteq \best{T_2}$ holds. 

\end{proof}

We conclude this section by showing our proof of~\cref{prop:comp_Pareto_curve}.  

\begin{proof}[Proof of~\cref{prop:comp_Pareto_curve} for $\ast = \seqcomp$]
   By definition and~\cref{thm:positionalDeterminacy}, the Pareto front $\paretoCurve{\shortcut{\opg{A}}\seqcomp \shortcut{\opg{B}}}{i}$ is given by 
   \begin{align*}
       \best{\Big\{\worst{\big\{\semPlay{\indPlay{\sigma_{\eve}}{\sigma_{\adam}}{i}} \ \big | \ \text{a positional $\sigma_{\adam}$ on $\shortcut{\opg{A}}\seqcomp \shortcut{\opg{B}}$} \big\}} \ \Big | \ \text{a positional $\sigma_{\eve}$ on $\shortcut{\opg{A}}\seqcomp \shortcut{\opg{B}}$} \Big\}}.
   \end{align*}
   For a positional $\eve$-strategy $\sigma_{\eve}$ on $\shortcut{\opg{A}}\seqcomp \shortcut{\opg{B}}$, there is a corresponding entrance-dependent $\eve$-strategy $\tilde{\sigma_{\eve}}$ on $\opg{A}\seqcomp \opg{B}$ by mimicking $\sigma_{\eve}$.
   Similarly, the following equality holds:  
   \begin{align*}
       &\worst{\big\{\semPlay{\indPlay{\sigma_{\eve}}{\sigma_{\adam}}{i}} \ \big | \ \text{a positional $\sigma_{\adam}$ on $\shortcut{\opg{A}}\seqcomp \shortcut{\opg{B}}$} \big\}}\\
        = &\worst{\big\{\semPlay{\indPlay{\tilde{\sigma_{\eve}}}{\tilde{\sigma_{\adam}}}{i}} \ \big | \ \text{a positional $\sigma_{\adam}$ on $\shortcut{\opg{A}}\seqcomp \shortcut{\opg{B}}$} \big\}},
   \end{align*}
   where the $\adam$-strategy $\tilde{\sigma_{\adam}}$ is the corresponding entrance-dependent $\adam$-strategy of a positional $\adam$-strategy $\sigma_{\adam}$, and 
   the play $\indPlay{\tilde{\sigma_{\eve}}}{\tilde{\sigma_{\adam}}}{i}$ is on $\opg{A}\seqcomp \opg{B}$. 
   By~\cref{lem:monoAdam,lem:weakHalfEntEepDet}, we can further say that the Pareto front $\paretoCurve{\shortcut{\opg{A}}\seqcomp \shortcut{\opg{B}}}{i}$ is given by
   \begin{align*}
       \best{\Big\{\worst{\big\{\semPlay{\indPlay{\tilde{\sigma_{\eve}}}{\sigma_{\adam}}{i}} \ \big | \ \text{a $\adam$-strategy $\sigma_{\adam}$ on $\opg{A}\seqcomp \opg{B}$} \big\}} \ \Big | \ \text{a positional $\sigma_{\eve}$ on $\shortcut{\opg{A}}\seqcomp \shortcut{\opg{B}}$} \Big\}},
   \end{align*}
   By using the lower preorder $\lowerOrder$ on the sets of maximal results, which is a partial order (\cref{lem:lowerOrder}), we obtain the following inequality: 
   \begin{align*}
       &\paretoCurve{\shortcut{\opg{A}}\seqcomp \shortcut{\opg{B}}}{i}\\
       \lowerOrder  &\best{\Big\{\worst{\big\{\semPlay{\indPlay{\sigma_{\eve}}{\sigma_{\adam}}{i}} \ \big | \ \text{a $\adam$-strategy $\sigma_{\adam}$ on $\opg{A}\seqcomp \opg{B}$} \big\}} \ \Big | \ \text{an $\eve$-strategy $\sigma_{\eve}$ on $\opg{A}\seqcomp \opg{B}$} \Big\}}.
   \end{align*}
   On the other hand, since positional strategies are entrance-dependent, we obtain the following inequality by~\cref{lem:monoAdam,lem:monoEve}: 
   \begin{align*}
       &\paretoCurve{\shortcut{\opg{A}}\seqcomp \shortcut{\opg{B}}}{i}\\
       \succLowerOrder &\best{\Big\{\worst{\big\{\semPlay{\indPlay{\sigma_{\eve}}{\sigma_{\adam}}{i}} \ \big | \ \text{entrance-dependent $\sigma_{\adam}$ on $\opg{A}\seqcomp \opg{B}$} \big\}} \ \Big | \ \text{positional $\sigma_{\eve}$ on $\opg{A}\seqcomp \opg{B}$} \Big\}}.
   \end{align*}
   By~\cref{thm:positionalDeterminacy,lem:lowerOrder}, we can conclude that $\paretoCurve{\opg{A}\ast \opg{B}}{i} = \paretoCurve{\shortcut{\opg{A}}\seqcomp \shortcut{\opg{B}}}{i}$.  

\end{proof}

\end{document}